\numberwithin{equation}{section}
\newtheorem{thm}{Theorem}[section]
\newtheorem{prop}[thm]{Proposition}
\newtheorem{lem}[thm]{Lemma}
\newtheorem{cor}[thm]{Corollary}
\theoremstyle{definition}
\newtheorem{defn}[thm]{Definition}
\begin{document}

\title[Algebraic Construction of multi-species $q$-Boson system]
{Algebraic Construction of \\ multi-species $q$-Boson system}
\author{Yoshihiro Takeyama}
\address{Division of Mathematics, 
Faculty of Pure and Applied Sciences, 
University of Tsukuba, Tsukuba, Ibaraki 305-8571, Japan}
\email{takeyama@math.tsukuba.ac.jp}

\begin{abstract}
We construct a stochastic particle system which is a multi-species version of 
the $q$-Boson system due to Sasamoto and Wadati. 
Its transition rate matrix is obtained from a representation 
of a deformation of the affine Hecke algebra of type $GL$. 
\end{abstract}

\maketitle

\setcounter{section}{0}
\setcounter{equation}{0}


\section{Introduction}

In this article we construct a multi-species version of the $q$-Boson system due to 
Sasamoto and Wadati \cite{SW} by using a representation of a deformation of the affine Hecke algebra. 

The $q$-Boson system is a stochastic particle system on  
the one-dimensional lattice $\mathbb{Z}$. 
The particles can occupy the same site simultaneously, 
and one particle may move from site $i$ to $i-1$ independently for each $i \in \mathbb{Z}$. 
The rate at which one particle moves from a cluster with $n$ particles is given by 
$1-q^{n}$, where $q$ is a parameter of the model. 

The multi-species version which we propose in this paper is described as follows. 
Fix a positive integer $N$. 
Each bosonic particle is colored with a positive integer which is less than or equal to $N$. 
One particle may move to the left in the same way as the $q$-Boson system, 
but the rate is different. 
Let $b \in \{1, 2, \ldots, N\}$ be the color of the moving particle 
and $m_{j} \, (j=b, b+1, \ldots , N)$ the number of particles with color $j$ in the cluster from which 
the moving particle leaves. 
Then the rate is given by 
\begin{align*}
\frac{1-q^{m_{b}}}{1-q}q^{\sum_{j=b+1}^{N}m_{j}}.
\end{align*}
If $N=1$, the transition rate matrix is equal to that of the $q$-Boson system 
up to constant multiplication. 

In the following we describe how the multi-species model arises from 
representation theory. 
Hereafter we fix a positive integer $k$ which signifies the number of particles. 

In a previous paper \cite{T}, 
we introduced a deformation of the affine Hecke algebra of type $GL_{k}$ with four parameters 
and found that an integrable stochastic particle system can be constructed from 
its representation as follows. 
The deformed algebra has a representation on the vector space $F(L)$ 
of $\mathbb{C}$-valued functions on the $k$-dimensional orthogonal lattice $L:=\mathbb{Z}^{k}$. 
It is defined in terms of a generalization of 
the discrete integral-reflection operators due to van Diejen and Emsiz \cite{DE}. 
Using them we define an operator $G: F(L) \to F(L)$, 
which is a discrete analogue of the propagation operator introduced by Gutkin \cite{G} 
in order to construct eigenfunctions of the Hamiltonian with delta potential, 
and determine the discrete Hamiltonian $H$ by the property 
$HG=G\sum_{i=1}^{k}t_{i}$, where $t_{i}$ is the shift operator 
$(t_{i}f)(x_{1}, \ldots , x_{k}):=f(\ldots , x_{i}-1, \ldots)$. 
Then the operator $H$ leaves the subspace $F(L)^{\mathfrak{S}_{k}}$ 
of symmetric functions invariant. 
Specializing the parameters and 
restricting $H$ to $F(L)^{\mathfrak{S}_{k}}$, 
we obtain the transition rate matrix of a continuous-time Markov process.  
The resulting system is a continuous-time limit of 
the $q$-Hahn system due to Povolotsky \cite{P, BC}.  

In this paper we generalize the above construction in a similar way 
to the generalization of the periodic delta Bose gas due to 
Emsiz, Opdam and Stokman \cite{EOS}. 
In this article we only consider the case where one of the parameters of the deformed algebra is equal to zero. 
Then the algebra, which we denote by $\mathcal{A}_{k}$, 
essentially has two parameters $\alpha$ and $q$. 
The algebra $\mathcal{A}_{k}$ contains the Hecke algebra $\mathcal{H}_{k}$ of type $A_{k-1}$ as a subalgebra. 
Let $M$ be a left $\mathcal{H}_{k}$-module and 
denote by $F(L, M)$ the vector space of functions on $L$ taking values in $M$. 
Then we can define an action of $\mathcal{A}_{k}$ on $F(L, M)$, 
introduce the propagation operator $G$ and 
determine the discrete Hamiltonian $H$ acting on $F(L, M)$ from the property 
$HG=G\sum_{i=1}^{k}t_{i}$.  
Then the Hamiltonian $H$ leaves a subspace $F_{0}(L, M)$ 
(see Definition \ref{defn:invariant-subspace} below) of $F(L, M)$ invariant. 

Now the multi-species version of the $q$-Boson system is constructed as follows. 
Let $U$ be an $N$-dimensional vector space. 
We regard $U^{\otimes k}$ as a left $\mathcal{H}_{k}$-module with respect to 
the action found by Jimbo \cite{J}. 
Then the invariant subspace $F_{0}(L, U^{\otimes k})$ is identified with 
the vector space of functions $F(\mathcal{S})$ on the set of configurations of 
$k$ bosonic particles of $N$ species on the one-dimensional lattice $\mathbb{Z}$. 
Setting $\alpha=-(1-q)$ and restricting the Hamiltonian $H$ to $F(\mathcal{S})$, 
we obtain the transition rate matrix of the multi-species $q$-Boson system 
up to an additive constant. 

Using the propagation operator $G: F(L, M)\to F(L, M)$, 
we can construct eigenfunctions of the discrete Hamiltonian $H$ 
by means of the Bethe ansatz method, which we call the Bethe wave functions. 
We should construct Plancherel theory for them to analyze the multi-species $q$-Boson system 
in a similar manner to 
the work of Borodin, Corwin, Petrov and Sasamoto \cite{BCPS1, BCPS2}. 
We leave it as a future problem. 

The paper is organized as follows. 
In Section \ref{sec:hecke} we introduce the deformation of 
the affine Hecke algebra and its representation 
defined by the discrete integral-reflection operators. 
In Section \ref{sec:Hamiltonian} we define the propagation operator 
and the discrete Hamiltonian, 
and construct the Bethe wave functions. 
In Section \ref{sec:system} we derive the transition rate matrix 
of the multi-species $q$-Boson system from the discrete Hamiltonian.


\section{A deformation of the affine Hecke algebra and its representation}\label{sec:hecke}

\subsection{Preliminaries}

Throughout this paper we fix an integer $k \ge 2$. 
Let $V:=\oplus_{i=1}^{k}\mathbb{R}v_{i}$ be the $k$-dimensional Euclidean space  
with the orthonormal basis $\{v_{i}\}_{i=1}^{k}$. 
We denote by $V^{*}$ the linear dual of $V$ 
and by $\{\epsilon_{i}\}_{i=1}^{k}$ the dual basis of $V^{*}$ corresponding to $\{v_{i}\}_{i=1}^{k}$.  
For $i, j=1, \ldots , k$ we set $\alpha_{ij}:=\epsilon_{i}-\epsilon_{j}$. 
Then the set $R:=\{\alpha_{ij} \, | \, i\not=j\}$ forms 
the root system of type $A_{k-1}$ with the simple roots $a_{i}:=\alpha_{i, i+1} \, (1\le i<k)$. 
We denote the set of the associated positive and negative roots by $R^{+}$ and $R^{-}$, respectively.  

Let $s_{i} \, : \, V \to V \, (1\le i<k)$ be the orthogonal reflection
\begin{align*}
s_{i}(v):=v-a_{i}(v)a^{\vee}_{i}, 
\end{align*}
where $a^{\vee}_{i}:=v_{i}-v_{i+1}$ is the simple coroot.  
The Weyl group $W$ of type $A_{k-1}$ is generated by the simple reflections $\{s_{i}\}_{i=1}^{k-1}$. 
We denote the length of $w \in W$ by $\ell(w)$. 
The dual space $V^{*}$ is a $W$-module by 
$(w\lambda)(v):=\lambda(w^{-1}v) \, (w \in W, \lambda \in V^{*}, v \in V)$. 

Let $v \in V$. 
The orbit $Wv$ intersects the closure of the fundamental chamber 
\begin{align*}
\overline{C_{+}}:=\{v \in V \, | \, a_{i}(v)\ge 0 \,\, (i=1, \ldots , k-1)\}  
\end{align*}
at one point. 
Take the shortest element $w \in W$ such that $wv \in \overline{C_{+}}$. 
We denote it by $w_{v}$. 
Set 
\begin{align*}
I(v):=\left\{ a \in R^{+} \, | \, a(v)<0 \right\}.  
\end{align*}
If $w_{v}=s_{i_{1}}\cdots s_{i_{\ell}}$ is a reduced expression, 
then $I(v)=\{ s_{i_{\ell}}\cdots s_{i_{p+1}}(a_{i_{p}})\}_{p=1}^{\ell}$. 
Therefore $\ell(w_{v})=\# I(v)$ and $I(v)=R^{+}\cap w_{v}^{-1}R^{-}$. 
{}From the above facts, we obtain the following lemma. 

\begin{lem}\label{lem:I}
Suppose that $I(v_{1}) \subset I(v_{2})$. Then 
$w_{v_{2}}=w_{w_{v_{1}}v_{2}}w_{v_{1}}$ and $\ell(w_{v_{2}})=\ell(w_{w_{v_{1}}v_{2}})+\ell(w_{v_{1}})$.  
\end{lem}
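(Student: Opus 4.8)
The plan is to translate everything into the language of inversion sets. For $w \in W$ write $N(w):=R^{+}\cap w^{-1}R^{-}=\{a\in R^{+}\mid wa\in R^{-}\}$, so that the facts recalled just above read $N(w_{v})=I(v)$ and $\#N(w_{v})=\ell(w_{v})$; I will also use the standard fact that $w$ is determined by $N(w)$. Abbreviating $u:=w_{v_{1}}$ and $u':=w_{w_{v_{1}}v_{2}}$, the two assertions become $N(u'u)=I(v_{2})$ (which, by injectivity of $w\mapsto N(w)$ together with $N(w_{v_{2}})=I(v_{2})$, yields $u'u=w_{v_{2}}$) and $\#N(u'u)=\#N(u')+\#N(u)$.

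First I would compute $N(u'u)$ by the elementary product rule, obtained by splitting on whether $u\alpha$ is positive: for $\alpha\in R^{+}$ one has $\alpha\in N(u'u)$ if and only if either $\alpha\in N(u)$ and $-u\alpha\notin N(u')$, or $\alpha\notin N(u)$ and $u\alpha\in N(u')$. The two ingredients are $N(u)=I(v_{1})=\{\alpha\in R^{+}\mid \alpha(v_{1})<0\}$ and, using $a(uv_{2})=(u^{-1}a)(v_{2})$, the description $N(u')=I(w_{v_{1}}v_{2})=\{a\in R^{+}\mid (u^{-1}a)(v_{2})<0\}$. Substituting these and simplifying each case (in the first, $-u\alpha\in R^{+}$ and $-u\alpha\notin N(u')\iff\alpha(v_{2})\le 0$; in the second, $u\alpha\in R^{+}$ and $u\alpha\in N(u')\iff\alpha(v_{2})<0$) gives
\[
N(u'u)=\{\alpha\in R^{+}\mid\alpha(v_{1})<0,\ \alpha(v_{2})\le 0\}\ \sqcup\ \{\alpha\in R^{+}\mid\alpha(v_{1})\ge 0,\ \alpha(v_{2})<0\}.
\]

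At this point the hypothesis $I(v_{1})\subset I(v_{2})$ enters decisively: it says $\alpha(v_{1})<0\Rightarrow\alpha(v_{2})<0$, so no positive root satisfies $\alpha(v_{1})<0$ and $\alpha(v_{2})=0$, and the first set above is exactly $\{\alpha\in R^{+}\mid\alpha(v_{1})<0\}=N(u)$. Hence the displayed union is disjoint and equals $\{\alpha\in R^{+}\mid\alpha(v_{2})<0\}=I(v_{2})$, which proves $u'u=w_{v_{2}}$. For the length identity it then remains to see that the second constituent set has cardinality $\#N(u')=\ell(u')$, for which I would check that $\alpha\mapsto u\alpha$ maps it bijectively onto $N(u')$; combined with $\#N(u)=\ell(u)$ this gives $\#N(u'u)=\ell(u)+\ell(u')$, i.e.\ $\ell(w_{v_{2}})=\ell(w_{w_{v_{1}}v_{2}})+\ell(w_{v_{1}})$.

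The one genuine point to verify — and the step I expect to be the crux — is surjectivity of that map, namely that for $a\in N(u')$ the root $u^{-1}a$ is positive (so that it lands in the second set); here the hypothesis is used once more, since $u^{-1}a\in R^{-}$ would force $-u^{-1}a\in I(v_{1})\subset I(v_{2})$ and hence $(u^{-1}a)(v_{2})>0$, contradicting $a\in N(u')$. The merit of working throughout with inversion sets is that it never requires analyzing the stabilizer of $v_{2}$ (which can be nontrivial when $v_{2}$ lies on a wall): the identity $u'u=w_{v_{2}}$ is extracted purely from $N(u'u)=I(v_{2})$ via the injectivity of $w\mapsto N(w)$.
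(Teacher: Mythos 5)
Your proof is correct: the product rule for $N(u'u)$, the two-case simplification using $N(u)=I(v_{1})$ and $N(u')=I(w_{v_{1}}v_{2})$, the use of the hypothesis to turn the first constituent set into $N(u)$ and the union into $I(v_{2})$, and the surjectivity argument for $\alpha\mapsto u\alpha$ all check out. The paper offers no written proof --- it asserts the lemma follows directly from the facts $I(v)=R^{+}\cap w_{v}^{-1}R^{-}$ and $\ell(w_{v})=\#I(v)$ --- and your inversion-set computation is precisely the natural fleshing-out of that one-line justification, so this is essentially the same approach.
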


We denote by $L$ the $k$-dimensional orthogonal lattice in $V$ defined by 
\begin{align*}
L=\bigoplus_{i=1}^{k}\mathbb{Z}v_{i},   
\end{align*}
and set 
\begin{align*}
L_{+}:=L \cap \overline{C_{+}}.  
\end{align*}
Hereafter we set $\epsilon_{k+1}(x)=-\infty$ for $x \in L_{+}$.

For $x \in L$ and $1\le i \le k$, we set 
\begin{align*}
d_{i}^{+}(x)&:=\#\left\{ p \, | \, i<p \le k, \, \epsilon_{i}(x)=\epsilon_{p}(x) \right\}, \\ 
d_{i}^{-}(x)&:=\#\left\{ p \, | \, 1\le i<p, \, \epsilon_{i}(x)=\epsilon_{p}(x) \right\}. 
\end{align*}
We denote by $\sigma_{x} \in \mathfrak{S}_{k}$ the permutation 
determined by $w_{x}v_{i}=v_{\sigma_{x}(i)} \, (1\le i \le k)$.  
Then it holds that  
\begin{align}
d_{i}^{\pm}(x)=d_{\sigma_{x}(i)}^{\pm}(w_{x}x), \quad 
\epsilon_{i}(x)=\epsilon_{\sigma_{x}(i)}(w_{x}x) 
\label{eq:d-change}
\end{align}
for $x \in L$ and $1\le i\le k$.

\begin{prop}\label{prop:shortest-prod}
Suppose that $1 \le i<k$. 
\begin{enumerate}
 \item If $v \in V$ satisfies $a_{i}(v)>0$, then 
$w_{s_{i}v}=w_{v}s_{i}$ and $\ell(w_{s_{i}v})=\ell(w_{v})+1$. 
 \item If $x \in L$ satisfies $a_{i}(x)=0$, then 
$\sigma_{x}(i+1)=\sigma_{x}(i)+1, w_{x}s_{i}=s_{\sigma_{x}(i)}w_{x}$ and 
$\ell(w_{x}s_{i})=\ell(s_{\sigma_{x}(i)}w_{x})=\ell(w_{x})+1$. 
\end{enumerate}
\end{prop}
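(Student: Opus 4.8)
The plan is to handle the two parts separately: part (i) will follow almost immediately from Lemma \ref{lem:I} once an auxiliary point is introduced, while part (ii) will be reduced to the combinatorics of the inversion set $I(x)$, the adjacency statement being the only delicate point.

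For part (i), fix any $v_0$ in the open chamber $\{v \in V \mid a_j(v) > 0 \ (1 \le j < k)\}$ and set $v_1 := s_i v_0$. Since $a_i(v_1) = a_i(s_i v_0) = -a_i(v_0) < 0$ while $s_i$ permutes $R^{+} \setminus \{a_i\}$ (so every other positive root stays positive on $v_1$), one reads off $I(v_1) = \{a_i\}$ and hence $w_{v_1} = s_i$. The hypothesis $a_i(v) > 0$ gives $a_i(s_i v) = -a_i(v) < 0$, so $a_i \in I(s_i v)$ and therefore $I(v_1) = \{a_i\} \subset I(s_i v)$. Applying Lemma \ref{lem:I} with this $v_1$ and $v_2 := s_i v$, and using $w_{v_1} v_2 = s_i(s_i v) = v$, I obtain $w_{s_i v} = w_{w_{v_1} v_2} w_{v_1} = w_v s_i$ together with $\ell(w_{s_i v}) = \ell(w_v) + \ell(s_i) = \ell(w_v) + 1$, which is exactly the claim.

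For part (ii), I would first record the dictionary between $w_x$ and $\sigma_x$. Since $s_j$ interchanges $v_j$ and $v_{j+1}$, the assignment $w \mapsto \sigma_w$ is a group homomorphism and $w_x \epsilon_j = \epsilon_{\sigma_x(j)}$, so $w_x a_i = \alpha_{\sigma_x(i), \sigma_x(i+1)}$. The key input is the identity $I(x) = R^{+} \cap w_x^{-1} R^{-}$, which for $p < q$ translates into
\begin{align*}
\epsilon_p(x) < \epsilon_q(x) \iff \sigma_x(p) > \sigma_x(q).
\end{align*}
Because $a_i(x) = 0$ means $\epsilon_i(x) = \epsilon_{i+1}(x)$, the left-hand side fails for $(p,q) = (i,i+1)$, so $\sigma_x(i) > \sigma_x(i+1)$ is impossible and $\sigma_x(i) < \sigma_x(i+1)$.

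The main obstacle is upgrading this to the adjacency $\sigma_x(i+1) = \sigma_x(i) + 1$. I would argue by contradiction: if an index $j$ satisfied $\sigma_x(i) < \sigma_x(j) < \sigma_x(i+1)$, then $j \ne i, i+1$, and feeding the pairs of $j$ with $i$ and with $i+1$ into the displayed equivalence — splitting into the cases $j < i$ and $j > i+1$ — produces in each case two incompatible inequalities among $\epsilon_i(x) = \epsilon_{i+1}(x)$ and $\epsilon_j(x)$. Hence no such $j$ exists and the two sorted positions are consecutive. With adjacency in hand, $w_x a_i = \alpha_{\sigma_x(i),\sigma_x(i)+1} = a_{\sigma_x(i)}$ is a simple root, so conjugating the reflection gives $w_x s_i w_x^{-1} = s_{w_x a_i} = s_{\sigma_x(i)}$, i.e.\ $w_x s_i = s_{\sigma_x(i)} w_x$. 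Finally, $a_i \notin I(x)$ forces $w_x a_i \in R^{+}$, and the standard length criterion yields $\ell(w_x s_i) = \ell(w_x) + 1$; since $s_{\sigma_x(i)} w_x$ is the same element, its length agrees.
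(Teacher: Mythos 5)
Your proof is correct, and for part (ii) it takes a genuinely different route from the paper's. For (i) the two arguments are close in spirit --- both run on inversion sets --- but the paper computes $I(s_{i}v)=s_{i}I(v)\sqcup\{a_{i}\}$ directly and identifies $w_{v}s_{i}$ as the shortest element moving $s_{i}v$ into $\overline{C_{+}}$, whereas you feed the same information into Lemma \ref{lem:I} via the auxiliary point $v_{1}=s_{i}v_{0}$ with $I(v_{1})=\{a_{i}\}$; both work, and yours is a nice economy given that Lemma \ref{lem:I} is already on record. The real divergence is in (ii). The paper establishes the adjacency $\sigma_{x}(i+1)=\sigma_{x}(i)+1$ by counting with $d_{i}^{\pm}$ and \eqref{eq:d-change}, and then gets $w_{x}s_{i}=s_{\sigma_{x}(i)}w_{x}$ together with the length statement by a self-contained perturbation trick: it sets $z=x-v_{i+1}/2$, applies part (i) and Lemma \ref{lem:I} to $z$ and $s_{i}z$, and compares the explicit reduced expressions $w_{w_{x}z}=s_{p-1}\cdots s_{\sigma_{x}(i+1)}$ and $w_{w_{x}s_{i}z}=s_{p-1}\cdots s_{\sigma_{x}(i)}$. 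You instead derive adjacency from the sorting characterization, for $p<q$, $\epsilon_{p}(x)<\epsilon_{q}(x)\iff\sigma_{x}(p)>\sigma_{x}(q)$, which is a correct translation of $I(x)=R^{+}\cap w_{x}^{-1}R^{-}$, and your two-case contradiction argument is sound; you then finish with two standard Coxeter facts, the conjugation formula $ws_{a}w^{-1}=s_{wa}$ (applicable because adjacency makes $w_{x}a_{i}=a_{\sigma_{x}(i)}$ simple) and the length criterion $\ell(ws_{i})=\ell(w)+1\iff w(a_{i})\in R^{+}$ (satisfied since $a_{i}\notin I(x)$). Your version is shorter and more transparent to anyone who knows basic Coxeter theory; the paper's version buys self-containedness, importing nothing beyond its own Lemma \ref{lem:I}, and its half-shift device $z=x-v_{i+1}/2$ matches the style of argument behind Proposition \ref{prop:shortest-shift}. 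One small presentational point: your adjacency step tacitly uses that $\sigma_{x}$ is a bijection (so any value strictly between $\sigma_{x}(i)$ and $\sigma_{x}(i+1)$ would be attained by some $j$); this deserves half a sentence but is harmless.
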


\begin{proof}
(i)\, From $a_{i}(v)>0$ we see that $I(s_{i}v)=s_{i}I(v) \sqcup \{a_{i}\}$. 
Hence $\ell(w_{s_{i}v})=\ell(w_{v})+1$. 
Since $w_{v}s_{i}$ moves $s_{i}v$ into $\overline{C_{+}}$, 
we have $w_{s_{i}v}=w_{v}s_{i}$. 

(ii)\, 
Since $a_{i}(x)=0$ and $w_{x}x \in L_{+}$, 
there exists an integer $p$ such that 
$1\le p \le k$ and 
$\epsilon_{\sigma_{x}(i)}(w_{x}x)=\epsilon_{\sigma_{x}(i+1)}(w_{x}x)=\epsilon_{p}(w_{x}x)>\epsilon_{p+1}(w_{x}x)$.  
Then, for $j=i$ and $i+1$, 
it holds that 
$p-\sigma_{x}(j)=d_{\sigma_{x}(j)}^{+}(w_{x}x)=d_{j}^{+}(x)$.
On the other hand, $d_{i}^{+}(x)=d_{i+1}^{+}(x)+1$ because $a_{i}(x)=0$. 
Therefore 
\begin{align*}
\sigma_{x}(i+1)=p-d_{i+1}^{+}(x)=p-d_{i}^{+}(x)+1=\sigma_{x}(i)+1.  
\end{align*}

Set $z:=x-v_{i+1}/2$. 
Since $|a(v_{j}/2)|\le 1/2 \, (j=i, i+1)$ for any $a \in R$, 
we have $I(x) \subset I(z)$ and $I(x) \subset I(s_{i}z)$. 
Moreover, $w_{s_{i}z}=w_{z}s_{i}$ and $\ell(w_{s_{i}z})=\ell(w_{z})+1$ from (i). 
Therefore, using Lemma \ref{lem:I}, we get 
\begin{align}
w_{w_{x}z}w_{x}s_{i}=w_{w_{x}s_{i}z}w_{x}, \quad   
\ell(w_{w_{x}z})+\ell(w_{x})+1=\ell(w_{w_{x}s_{i}z})+\ell(w_{x}). 
\label{eq:shortest-prod}
\end{align}
Note that $w_{x}z=w_{x}x-v_{\sigma_{x}(i+1)}/2$ and $w_{x}s_{i}z=w_{x}x-v_{\sigma_{x}(i)}/2$. 
Since $w_{x}x \in L_{+}$, we have 
\begin{align*}
I(w_{x}z)=\{ \alpha_{\sigma_{x}(i+1), l} \}_{l=\sigma_{x}(i+1)+1}^{p}, \quad 
I(w_{x}s_{i}z)=\{ \alpha_{\sigma_{x}(i), l} \}_{l=\sigma_{x}(i)+1}^{p}. 
\end{align*}
Hence 
\begin{align*}
w_{w_{x}z}=s_{p-1}s_{p-2}\cdots s_{\sigma_{x}(i+1)}, \quad 
w_{w_{x}s_{i}z}=s_{p-1}s_{p-2}\cdots s_{\sigma_{x}(i)},  
\end{align*}
where the right hand sides are reduced expressions. 
Using $\sigma_{x}(i+1)=\sigma_{x}(i)+1$ and \eqref{eq:shortest-prod}, 
we find that $w_{x}s_{i}=s_{\sigma_{x}(i)}w_{x}$ and 
$\ell(w_{x}s_{i})=\ell(s_{\sigma_{x}(i)}w_{x})=\ell(w_{x})+1$. 
\end{proof}

We will also use the following proposition. 
See Lemma 3.9 and Lemma 3.10 in \cite{T} for the proof.  

\begin{prop}\label{prop:shortest-shift}
Suppose that $x \in L$ and $1 \le i \le k$. 
Set $y=x-v_{i}$. Then it holds that 
\begin{align}
& 
s_{\sigma_{y}(i)-d_{i}^{-}(y)} \cdots s_{\sigma_{y}(i)-1}w_{y}=
s_{\sigma_{x}(i)+d_{i}^{+}(x)-1} \cdots s_{\sigma_{x}(i)}w_{x},  \nonumber \\ 
& 
a_{j}(w_{x}x)=0 \quad (\sigma_{x}(i)\le j<\sigma_{x}(i)+d_{i}^{+}(x)), 
\label{eq:main} \\ 
& 
a_{j}(w_{y}y)=0 \quad (\sigma_{y}(i)-d_{i}^{-}(y)\le j<\sigma_{y}(i)) \nonumber 
\end{align} 
and $d_{i}^{-}(y)+\ell(w_{y})=d_{i}^{+}(x)+\ell(w_{x})$. 
\end{prop}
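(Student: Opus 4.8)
The plan is to compare $w_x$ and $w_y$ indirectly, through the half-integer intermediate point $\xi:=x-\tfrac12 v_i=y+\tfrac12 v_i$, which lies strictly between $x$ and $y$ and breaks every tie that obstructs a direct comparison. Since $\epsilon_i(\xi)=\epsilon_i(x)-\tfrac12$ is not an integer while all other coordinates of $\xi$ are, the sorted point $w_\xi\xi\in L_+$ has the entry coming from position $i$ sitting \emph{alone}, between the block of coordinates $\ge \epsilon_i(x)$ and the block of coordinates $\le \epsilon_i(x)-1$. This is the same device as the perturbation $z=x-v_{i+1}/2$ used in the proof of Proposition \ref{prop:shortest-prod}(ii), and the whole statement will drop out of relating $w_\xi$ to $w_x$ on one side and to $w_y$ on the other.

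First I would dispose of the two block identities. They are immediate from \eqref{eq:d-change} together with the fact that $w_x x,\,w_y y\in L_+$ are weakly decreasing: in a weakly decreasing sequence the positions carrying a fixed value are contiguous, so $\epsilon_{\sigma_x(i)}(w_x x)=\cdots=\epsilon_{\sigma_x(i)+d^{+}_{\sigma_x(i)}(w_x x)}(w_x x)$, and $d^{+}_{\sigma_x(i)}(w_x x)=d_i^{+}(x)$ by \eqref{eq:d-change}, giving $a_j(w_x x)=0$ for $\sigma_x(i)\le j<\sigma_x(i)+d_i^{+}(x)$; the statement for $w_y y$ is the mirror image with $d^{-}$ in place of $d^{+}$. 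Next comes the core computation. Lowering the $i$-th coordinate from $\epsilon_i(x)$ to $\epsilon_i(x)-\tfrac12$ creates exactly the new inversions $\{\alpha_{i,p}\mid i<p,\ \epsilon_p(x)=\epsilon_i(x)\}$ and destroys none, because integrality means the half-integer threshold does not change which pairs $(p,i)$ with $p<i$ are inversions; hence $I(x)\subset I(\xi)$. The symmetric count from $y$ gives $I(y)\subset I(\xi)$, the extra inversions now being $\{\alpha_{p,i}\mid p<i,\ \epsilon_p(x)=\epsilon_i(x)-1\}$. Applying Lemma \ref{lem:I} to each containment yields the two factorizations $w_\xi=w_{w_x\xi}\,w_x=w_{w_y\xi}\,w_y$ with lengths adding.

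It then remains to identify the two difference factors. Since $w_x\xi=w_x x-\tfrac12 v_{\sigma_x(i)}$ and the value-$\epsilon_i(x)$ block of the weakly decreasing $w_x x$ has length $d_i^{+}(x)+1$ starting at $\sigma_x(i)$, its inversion set is $\{\alpha_{\sigma_x(i),\,l}\}_{l=\sigma_x(i)+1}^{\sigma_x(i)+d_i^{+}(x)}$, so $w_{w_x\xi}=s_{\sigma_x(i)+d_i^{+}(x)-1}\cdots s_{\sigma_x(i)}$, exactly as in the computation of $I(w_x z)$ in Proposition \ref{prop:shortest-prod}. Symmetrically, $w_y\xi=w_y y+\tfrac12 v_{\sigma_y(i)}$ raises one coordinate above the $d_i^{-}(y)$ equal entries preceding it, so $I(w_y\xi)=\{\alpha_{p,\,\sigma_y(i)}\}_{p=\sigma_y(i)-d_i^{-}(y)}^{\sigma_y(i)-1}$ and $w_{w_y\xi}=s_{\sigma_y(i)-d_i^{-}(y)}\cdots s_{\sigma_y(i)-1}$. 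Substituting these into $w_{w_x\xi}w_x=w_\xi=w_{w_y\xi}w_y$ gives the first identity of the proposition, and equating the two length expressions $\ell(w_\xi)=d_i^{+}(x)+\ell(w_x)=d_i^{-}(y)+\ell(w_y)$ gives the final length identity. The step I expect to be the main obstacle is precisely this bookkeeping: reading off the reduced words $w_{w_x\xi}$ and $w_{w_y\xi}$ from their inversion sets with the correct index ranges and the correct (decreasing versus increasing) ordering of the $s_j$, and handling the degenerate cases $d_i^{+}(x)=0$ or $d_i^{-}(y)=0$, where the relevant product is empty and the factorization collapses to $w_\xi=w_x$ or $w_\xi=w_y$. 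The geometric point that $\xi$ separates the two integer blocks cleanly is exactly what forces each inversion set to be an interval of roots sharing a single index, so that each difference factor is a single cyclic shift.
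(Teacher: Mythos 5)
Your argument is correct, but there is nothing in the paper to match it against: the paper omits the proof of Proposition \ref{prop:shortest-shift} entirely, referring to Lemmas 3.9 and 3.10 of \cite{T}. What you have produced is a self-contained proof built from the paper's own toolkit --- Lemma \ref{lem:I} together with the half-step perturbation device that the paper uses in the proof of Proposition \ref{prop:shortest-prod}(ii) (there with $z=x-v_{i+1}/2$) --- with the economical twist that a \emph{single} midpoint $\xi=x-\tfrac12 v_i=y+\tfrac12 v_i$ dominates both endpoints at once. I checked the key computations and they hold: by integrality, lowering $\epsilon_i$ by $\tfrac12$ creates exactly the $d_i^{+}(x)$ inversions $\{\alpha_{il}\mid i<l,\ \epsilon_l(x)=\epsilon_i(x)\}$ and destroys none, while raising $\epsilon_i(y)$ by $\tfrac12$ creates exactly the $d_i^{-}(y)$ inversions $\{\alpha_{ji}\mid j<i,\ \epsilon_j(y)=\epsilon_i(y)\}$; hence $I(x)\subset I(\xi)\supset I(y)$, and two applications of Lemma \ref{lem:I} give $w_{w_x\xi}w_x=w_\xi=w_{w_y\xi}w_y$ with lengths adding. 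The inversion sets $I(w_x\xi)$ and $I(w_y\xi)$ are intervals of roots with one common index, forcing the two difference factors to be the single cycles $s_{\sigma_x(i)+d_i^{+}(x)-1}\cdots s_{\sigma_x(i)}$ and $s_{\sigma_y(i)-d_i^{-}(y)}\cdots s_{\sigma_y(i)-1}$ (the identification of a length-$\#I$ sorting element with the unique shortest one is the same step the paper takes for $w_{w_x z}$ in Proposition \ref{prop:shortest-prod}(ii)); the two flatness statements in \eqref{eq:main} follow from \eqref{eq:d-change} and contiguity of equal values in a weakly decreasing vector, and the length identity from $\ell(w_\xi)=\ell(w_x)+d_i^{+}(x)=\ell(w_y)+d_i^{-}(y)$. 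You also flag the degenerate cases $d_i^{+}(x)=0$ and $d_i^{-}(y)=0$ correctly. Two harmless slips: $w_\xi\xi$ lies in $\overline{C_+}$ but not in $L_+$, since $\xi$ is not a lattice point; and the value-$\epsilon_i(x)$ block of $w_xx$ need not \emph{start} at position $\sigma_x(i)$ when $d_i^{-}(x)>0$ --- but your computation of $I(w_x\xi)$ only uses the $d_i^{+}(x)$ equal entries to the right of $\sigma_x(i)$ and the fact that all entries to the left are $\ge\epsilon_i(x)$, so nothing breaks. (You also implicitly read the paper's displayed definition of $d_i^{-}$ with the intended condition $p<i$ rather than the misprinted $1\le i<p$, which is the right call.) Net effect: your route makes the proposition independent of the external citation, at the cost of one page of inversion-set bookkeeping that the paper chose to outsource.
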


\subsection{A deformation of the affine Hecke algebra and its representation}

\begin{defn}
Let $\alpha$ and $q$ be complex constants.  
We define the algebra $\mathcal{A}_{k}$ to be the unital associative $\mathbb{C}$-algebra with 
the generators $X_{i}^{\pm 1} \, (1\le i \le k)$ and $T_{i} \, (1 \le i<k)$ satisfying 
the following relations: 
\begin{align*}
& 
(T_{i}-1)(T_{i}+q)=0 \quad (1 \le i<k), \qquad  
T_{i}T_{i+1}T_{i}=T_{i}T_{i+1}T_{i} \quad (1 \le i \le k-2), \\ 
& 
T_{i}T_{j}=T_{j}T_{i} \quad (|i-j|>1), \quad 
X_{i}X_{j}=X_{j}X_{i} \quad (i, j=1, \ldots , k), \\ 
& 
X_{i+1}T_{i}-T_{i}X_{i}=T_{i}X_{i+1}-X_{i}T_{i}=(1-q)X_{i+1}+\alpha \quad 
(1\le i<k), \\ 
& 
X_{i}T_{j}=T_{j}X_{i}\quad (i\not=j, j+1).  
\end{align*} 
\end{defn}

When $\alpha=0$, the algebra $\mathcal{A}_{k}$ is isomorphic to the affine Hecke algebra of type $GL_{k}$. 
The subalgebra generated by $T_{i} \, (1 \le i <k)$ is isomorphic to 
the Hecke algebra of type $A_{k-1}$.  
We denote it by $\mathcal{H}_{k}$.  

For a left $\mathcal{H}_{k}$-module $M$, we denote by $F(L, M)$ the complex vector space of 
functions on $L$ taking values in $M$. 
The Weyl group $W$ acts on $F(L, M)$ by 
$(wf)(x):=f(w^{-1}x) \, (w \in W, f \in F(L, M), x \in L)$. 
Let $\widehat{T}_{i} \, (1 \le i<k)$ be the $\mathbb{C}$-linear operator on $F(L, M)$ 
defined by $(\widehat{T}_{i}f)(x)=T_{i}.f(x) \, (f \in F(L, M), x \in L)$, 
where $.$ signifies the action of $\mathcal{H}_{k}$ on $M$. 
Then the assignment $T_{i} \mapsto \widehat{T}_{i} \, (1 \le i<k)$ uniquely determines 
an algebra homomorphism $\mathcal{H}_{k} \to \mathrm{End}(F(L, M))$. 
It commutes with the action of $W$. 

We identify the group algebra $\mathbb{C}[L]$ with the Laurent polynomial ring 
$\mathbb{C}[e^{\pm v_{1}}, \ldots , e^{\pm v_{k}}]$. 
The Weyl group acts on $\mathbb{C}[L]$ from the right by 
$e^{x}w=e^{w^{-1}x} \,(x \in L, w \in W)$. 
Using this action we define the $\mathbb{C}$-linear map 
$\check{I}_{j} : \mathbb{C}[L] \to \mathbb{C}[L] \, (1 \le j <k)$ by 
\begin{align*}
\check{I}_{j}(P):=(P-Ps_{j})\frac{\alpha e^{v_{j+1}}+1-q}{1-e^{-v_{j}+v_{j+1}}}.  
\end{align*}

Consider the non-degenerate $\mathbb{C}$-bilinear pairing 
$( \quad , \quad): \mathbb{C}[L] \times F(L, M) \to M$ uniquely determined 
by $(e^{x}, f)=f(x) \, (x \in L, f \in F(L, M))$. 
Now we define the $\mathbb{C}$-linear operator 
$\widehat{I}_{j}: F(L, M) \to F(L, M) \, (1 \le j<k)$ by the property 
\begin{align*}
(P, \widehat{I}_{j}(f))=(\check{I}_{j}(P), f)
\end{align*}
for any $P \in \mathbb{C}[L]$. 
It is explicitly written as follows. 
\begin{align}
(\widehat{I}_{j}f)(x)=\left\{ 
\begin{array}{ll}
\displaystyle 
\sum_{l=0}^{a_{j}(x)-1}\left( 
\alpha f(x-l a_{j}^{\vee}+v_{j+1})+(1-q)f(x-l a_{j}^{\vee})
\right) & (a_{j}(x)>0) \\
0 & (a_{j}(x)=0) \\
\displaystyle  
{}-\sum_{l=1}^{-a_{j}(x)}\left( 
\alpha f(x+la_{j}^{\vee}+v_{j+1})+(1-q)f(x+la_{j}^{\vee})
\right) & (a_{j}(x)<0) 
\end{array}
\right.
\label{eq:integration-op}
\end{align} 

\begin{lem}\label{lem:comm-rel}
The following relations hold in $\mathrm{End}(F(L, M))$. 
\begin{enumerate}
 \item $\widehat{I}_{j}^{\, 2}=(1-q)\widehat{I}_{j}$ for $1\le j <k$. 
 \item $s_{i}\widehat{I}_{j}=\widehat{I}_{j}s_{i}$ if $|i-j|\ge 2$. 
 \item $s_{j}\widehat{I}_{j}+\widehat{I}_{j}s_{j}=(1-q)(s_{j}-1)$ for $1\le j<k$. 
 \item $\widehat{I}_{j}s_{j+1}s_{j}=s_{j+1}s_{j}\widehat{I}_{j+1}$ 
and $\widehat{I}_{j+1}s_{j}s_{j+1}=s_{j}s_{j+1}\widehat{I}_{j}$  
for $1 \le j \le k-2$. 
 \item For $1 \le j \le k-2$, 
\begin{align*}
\widehat{I}_{j}s_{j+1}\widehat{I}_{j}&=
(1-q)s_{j+1}\widehat{I}_{j}s_{j+1}+s_{j+1}\widehat{I}_{j}\widehat{I}_{j+1}+
\widehat{I}_{j+1}\widehat{I}_{j}s_{j+1}, \\
\widehat{I}_{j+1}s_{j}\widehat{I}_{j+1}&=
(1-q)s_{j}\widehat{I}_{j+1}s_{j}+s_{j}\widehat{I}_{j+1}\widehat{I}_{j}+
\widehat{I}_{j}\widehat{I}_{j+1}s_{j}. 
\end{align*}
 \item $\widehat{I}_{j}\widehat{I}_{j+1}\widehat{I}_{j}+qs_{j}\widehat{I}_{j+1}s_{j}=
\widehat{I}_{j+1}\widehat{I}_{j}\widehat{I}_{j+1}+qs_{j+1}\widehat{I}_{j}s_{j+1}$ 
for $1\le j<k$.   
\end{enumerate}
\end{lem}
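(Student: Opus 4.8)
The plan is to transfer every relation to the Laurent polynomial ring $\mathbb{C}[L]$ by means of the pairing $(\,\cdot\,,\,\cdot\,)$ and to verify the transposed identities there, where each operator is given by the closed divided-difference form rather than the three-case sum \eqref{eq:integration-op}. First observe that every operator occurring in the lemma acts only on the argument $x\in L$ and not on the values in $M$: indeed $(s_if)(x)=f(s_ix)$, and $\widehat I_j$ is the linear combination of translates in \eqref{eq:integration-op}, so each is of the form $(\text{operator on }F(L))\otimes\mathrm{id}_M$. Hence it suffices to treat the case $M=\mathbb{C}$, i.e. to work in $\mathrm{End}(F(L))$ with $F(L)=F(L,\mathbb{C})$. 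Next recall that $(Ps_i,f)=(P,s_if)$, so right multiplication by $s_i$ on $\mathbb{C}[L]$ is the transpose of left multiplication by $s_i$ on $F(L)$, while $\check I_j$ is by definition the transpose of $\widehat I_j$. Since the pairing is nondegenerate, transposition is faithful and reverses the order of composition; thus each relation of the lemma is equivalent to the one obtained by replacing $\widehat I_j$ by $\check I_j$, each $s_i$ by right multiplication $P\mapsto Ps_i$, and reversing all products, now to be checked as an identity of operators on $\mathbb{C}[L]$.

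Write $c_j:=\frac{\alpha e^{v_{j+1}}+1-q}{1-e^{-v_j+v_{j+1}}}$, so that $\check I_j(P)=(P-Ps_j)c_j$; this lands in $\mathbb{C}[L]$ because $P-Ps_j$ is divisible by $e^{v_j}-e^{v_{j+1}}$. The whole argument rests on two elementary identities of rational functions, both proved by clearing denominators: $c_j+c_js_j=1-q$ and $c_js_{j+1}s_j=c_{j+1}$, together with its mirror $c_{j+1}s_js_{j+1}=c_j$. From $c_j+c_js_j=1-q$ and the immediate relations $\check I_j(Ps_j)=-\check I_j(P)$ and $(\check I_jP)s_j=-(P-Ps_j)(c_js_j)$, the transposed forms of (i) and (iii) drop out at once, and (ii) is clear since for $|i-j|\ge 2$ the reflection $s_i$ commutes with $s_j$ and fixes $c_j$. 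For (iv) one expands $(\check I_jP)s_{j+1}s_j$, uses the braid relation $s_{j+1}s_js_{j+1}=s_js_{j+1}s_j$ to rewrite $\check I_{j+1}(Ps_{j+1}s_j)$, and sees that the two sides coincide precisely because of $c_js_{j+1}s_j=c_{j+1}$; the second half of (iv) is the mirror statement.

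The genuine content is in (v) and (vi). These transpose to relations of exactly the same shape for the $\check I_j$, so duality buys nothing here: they are the deformed braid identities for the operators themselves. I would first reduce to the local rank-two situation, since all four operators $\widehat I_j,\widehat I_{j+1},s_j,s_{j+1}$ move only the coordinates $v_j,v_{j+1},v_{j+2}$; after restricting to these three indices one may assume $k=3$ and work in $\mathbb{C}[e^{\pm v_1},e^{\pm v_2},e^{\pm v_3}]$ with $\mathfrak{S}_3=\langle s_1,s_2\rangle$. Expanding each $\check I_j(P)=(P-Ps_j)c_j$ and pushing all reflections to the right produces, on both sides, $\mathbb{C}$-linear combinations of the six operators $P\mapsto Pw$ $(w\in\mathfrak{S}_3)$ with coefficients built from the transforms $c_jw$; matching these coefficients reduces (v) and (vi) to a finite list of identities among $c_1,c_2$ and their $\mathfrak{S}_3$-translates.

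This coefficient bookkeeping is the main obstacle, and I expect it to be the one laborious step. It can be organized by factoring $\check I_j=M_{g_j}\partial_j$, where $g_j:=\alpha e^{v_{j+1}}+1-q$ and $\partial_j(P):=\frac{P-Ps_j}{1-e^{-v_j+v_{j+1}}}$ is the idempotent Demazure operator satisfying $\partial_j^2=\partial_j$ and the braid relation $\partial_j\partial_{j+1}\partial_j=\partial_{j+1}\partial_j\partial_{j+1}$; the correction terms $qs_j\widehat I_{j+1}s_j$ and $qs_{j+1}\widehat I_js_{j+1}$ in (vi) then measure exactly the failure of $M_{g_j}$ to commute past $\partial_j$ and $s_j$. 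Should the rational-function computation prove unwieldy, the same identities can instead be verified directly on $F(L)$ by evaluating both sides at an arbitrary $x\in L$ via \eqref{eq:integration-op}, splitting into the three regions $a_j(x)\gtrless 0$ and using $a_j(a_j^\vee)=2$ and $a_j(v_{j+1})=-1$ to track how the summation ranges transform under successive applications.
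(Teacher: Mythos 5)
Your proposal is correct and is essentially the paper's own proof: the paper disposes of this lemma with the single line ``Straightforward check,'' i.e.\ a direct verification, and your transfer through the nondegenerate pairing to the dual operators $\check{I}_{j}$ on $\mathbb{C}[L]$ is exactly such a check, cleanly organized --- the identities $c_{j}+c_{j}s_{j}=1-q$ and $c_{j}s_{j+1}s_{j}=c_{j+1}$ do make (i)--(iv) one-line computations (I verified them), and your reduction of (v)--(vi) to matching coefficients of the operators $P\mapsto Pw$ over the fraction field of $\mathbb{C}[L]$ is legitimate, since those components are linearly independent by Artin--Dedekind independence of automorphisms. The only caveat is that you leave the finite rank-two coefficient bookkeeping for (v)--(vi) unexecuted, but the reduction is sound, the remaining computation is routine, and this is no less detail than the paper itself supplies.
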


\begin{proof}
Straightforward check.  
\end{proof}

\begin{prop}
Let $M$ be a left $\mathcal{H}_{k}$-module. 
For $1 \le j \le k$, let $t_{j}: F(L, M) \to F(L, M)$ be the shift operator 
\begin{align*}
(t_{j}f)(x):=f(x-v_{j}) \quad (f \in F(L, M), x \in L). 
\end{align*}
Then the assignments 
\begin{align*}
X_{j} \mapsto t_{j} \quad (1\le j\le k), \qquad 
T_{j} \mapsto \widehat{T}_{j}s_{j}+\widehat{I}_{j} \quad (1 \le j<k)  
\end{align*}
uniquely extend to a $\mathbb{C}$-algebra homomorphism 
$\rho: \mathcal{A}_{k} \to \mathrm{End}(F(L, M))$. 
\end{prop}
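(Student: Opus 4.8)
The plan is to verify that the proposed images satisfy every defining relation of $\mathcal{A}_k$; since $\mathcal{A}_k$ is presented by generators and relations, the universal property then yields a unique algebra homomorphism $\rho$ taking the stated values on generators. Write $\tau_j:=\widehat{T}_j s_j+\widehat{I}_j$ for the image of $T_j$. Two kinds of elementary commutations will be used throughout. First, $\widehat{T}_i$ acts pointwise on the $M$-values through the $\mathcal{H}_k$-action, so it commutes with every operator acting on the lattice argument, namely with $s_j$, $t_j$ and $\widehat{I}_j$; among themselves the $\widehat{T}_i$ obey the Hecke relations, in particular $\widehat{T}_j^{\,2}=(1-q)\widehat{T}_j+q$ together with the braid and far-commutation relations. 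Second, from the definitions one has $s_it_i=t_{i+1}s_i$, $s_it_{i+1}=t_is_i$, and $s_it_j=t_js_i$ for $j\neq i,i+1$, while $t_i$ commutes with $s_j$, $\widehat{T}_j$ and, by the explicit formula \eqref{eq:integration-op}, with $\widehat{I}_j$ whenever $i\neq j,j+1$.

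The easy relations come first. The shift operators commute, giving $X_iX_j=X_jX_i$; the relation $X_iT_j=T_jX_i$ for $i\neq j,j+1$ follows by expanding $\tau_j$ and using that $t_i$ commutes with $s_j,\widehat{T}_j,\widehat{I}_j$; and $T_iT_j=T_jT_i$ for $|i-j|>1$ follows from the Hecke far-commutation of the $\widehat{T}$, from $s_i\widehat{I}_j=\widehat{I}_js_i$ in Lemma \ref{lem:comm-rel}(ii), and from $\widehat{I}_i\widehat{I}_j=\widehat{I}_j\widehat{I}_i$ (immediate from \eqref{eq:integration-op}, as the two operators move disjoint pairs of coordinates). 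For the quadratic relation I expand
\begin{align*}
\tau_j^{\,2}=\widehat{T}_j^{\,2}s_j^2+\widehat{T}_j(s_j\widehat{I}_j+\widehat{I}_js_j)+\widehat{I}_j^{\,2},
\end{align*}
and substitute $s_j^2=1$, $\widehat{I}_j^{\,2}=(1-q)\widehat{I}_j$ from Lemma \ref{lem:comm-rel}(i), $s_j\widehat{I}_j+\widehat{I}_js_j=(1-q)(s_j-1)$ from Lemma \ref{lem:comm-rel}(iii), and $\widehat{T}_j^{\,2}=(1-q)\widehat{T}_j+q$; the terms collapse to $\tau_j^{\,2}=q+(1-q)\tau_j$, that is, $(\tau_j-1)(\tau_j+q)=0$. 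For the cross relations, using $s_it_i=t_{i+1}s_i$ the $\widehat{T}$-parts cancel and one is reduced to the purely combinatorial identities $t_{i+1}\widehat{I}_i-\widehat{I}_it_i=(1-q)t_{i+1}+\alpha$ and $\widehat{I}_it_{i+1}-t_i\widehat{I}_i=(1-q)t_{i+1}+\alpha$. Each I would check directly from \eqref{eq:integration-op}: in the range $a_i(x)>0$, after the reindexing $l\mapsto l+1$ the two sums agree except for a single boundary term, whose $\alpha$-part produces $\alpha f(x)$ and whose $(1-q)$-part produces $(1-q)f(x-v_{i+1})=(1-q)(t_{i+1}f)(x)$; the cases $a_i(x)=0$ and $a_i(x)<0$ are analogous.

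The braid relation $\tau_j\tau_{j+1}\tau_j=\tau_{j+1}\tau_j\tau_{j+1}$ is the main obstacle, and it is where Lemma \ref{lem:comm-rel}(iv)--(vi) enter. I would expand each side into its eight terms, move all $\widehat{T}$-factors to the left (legitimate since they commute with $s$ and $\widehat{I}$), and then reduce the single term carrying $\widehat{T}_j^{\,2}$ — namely $\widehat{T}_j^{\,2}s_j\widehat{I}_{j+1}s_j$ — by $\widehat{T}_j^{\,2}=(1-q)\widehat{T}_j+q$. The result is a sum of products $\widehat{T}_w\cdot(\text{operator in }s,\widehat{I})$, and because the $\widehat{T}$ act on the $M$-slot while the geometric operators act on the $L$-argument, it suffices to match, for each Hecke basis element $\widehat{T}_w$, its geometric coefficient on the two sides. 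The coefficient of $\widehat{T}_j\widehat{T}_{j+1}\widehat{T}_j=\widehat{T}_{j+1}\widehat{T}_j\widehat{T}_{j+1}$ matches by the braid relation $s_js_{j+1}s_j=s_{j+1}s_js_{j+1}$ in $W$; the two length-two coefficients match by Lemma \ref{lem:comm-rel}(iv); the coefficients of $\widehat{T}_j$ and of $\widehat{T}_{j+1}$ match, respectively, by the second and first identities of Lemma \ref{lem:comm-rel}(v); and the $\widehat{T}$-free coefficients match by Lemma \ref{lem:comm-rel}(vi). The delicate point — and the reason the statement holds — is precisely that the reduction of $\widehat{T}_j^{\,2}$ injects a term $(1-q)s_j\widehat{I}_{j+1}s_j$ into the coefficient of $\widehat{T}_j$ and a term $qs_j\widehat{I}_{j+1}s_j$ into the $\widehat{T}$-free part, and these are exactly the terms required to bring (v) and (vi) into play.
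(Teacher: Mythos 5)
Your proof is correct and takes essentially the same route as the paper: both reduce the presentation of $\mathcal{A}_{k}$ to the operator identities of Lemma \ref{lem:comm-rel}, and your coefficient-matching for the braid relation (in particular the terms $(1-q)s_{j}\widehat{I}_{j+1}s_{j}$ and $qs_{j}\widehat{I}_{j+1}s_{j}$ injected by the quadratic reduction of $\widehat{T}_{j}^{\,2}$) is precisely the computation that items (iv)--(vi) of that lemma are designed to close. The only cosmetic difference is in the cross relations, which you verify pointwise from \eqref{eq:integration-op} by a telescoping reindexation, whereas the paper verifies the equivalent identity for $\check{I}_{j}$ on $\mathbb{C}[L]$ and transports it through the pairing $(e^{-v_{j}}P, f)=(P, t_{j}f)$.
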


\begin{proof}
Note that the operators $\widehat{I}_{j} \, (1\le j<k)$ commute 
with $\widehat{T}_{j}\, (1\le j<k)$, 
and 
\begin{align*}
(e^{-v_{j}}P, f)=(P, t_{j}f).  
\end{align*} 
for $P \in \mathbb{C}[L]$ and $f \in F(L, M)$.  
Now we can check the defining relations of $\mathcal{A}_{k}$ by using Lemma \ref{lem:comm-rel} and 
the equality
\begin{align*}
\check{I}_{j}(e^{x-v_{j+1}})-e^{-v_{j}}\check{I}_{j}(e^{x})=((1-q)e^{-v_{j+1}}+\alpha)e^{x}=
\check{I}_{j}(e^{x-v_{j}})-e^{-v_{j+1}}\check{I}_{j}(e^{x})
\end{align*}
for $1\le j<k$ and $x \in L$. 
\end{proof}

We will often use the fact below which follows from \eqref{eq:integration-op}.  

\begin{prop}\label{prop:action-trivial}
Let $M$ be a left $\mathcal{H}_{k}$-module. 
Suppose that $1\le j<k$, $x \in L$ and $f \in F(L, M)$. 
If $a_{j}(x)=0$, then $(\rho(T_{j})f)(x)=T_{j}.f(x)$. 
\end{prop}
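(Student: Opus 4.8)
The plan is to evaluate the operator $\rho(T_j)=\widehat{T}_j s_j+\widehat{I}_j$ directly at the point $x$ and to invoke the hypothesis $a_j(x)=0$ separately for each of the two summands. Since everything reduces to the defining formulas, I expect no genuine difficulty; the content lies entirely in recognizing that the condition $a_j(x)=0$ simultaneously annihilates the integral part and trivializes the reflection on $x$, so that only the naive Hecke action survives.

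First I would dispose of the integration term. By the explicit formula \eqref{eq:integration-op}, the value $(\widehat{I}_j f)(x)$ is computed by the middle case precisely when $a_j(x)=0$, and that case gives $(\widehat{I}_j f)(x)=0$. Hence this summand contributes nothing to the value at $x$.

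Next I would treat the term $(\widehat{T}_j s_j f)(x)$. Unwinding the definitions, $(\widehat{T}_j s_j f)(x)=T_j.\bigl((s_j f)(x)\bigr)=T_j.f(s_j^{-1}x)=T_j.f(s_j x)$, where the last equality uses that the reflection $s_j$ is an involution. The key observation is that $x$ lies on the fixed hyperplane of $s_j$: from $s_j(x)=x-a_j(x)a_j^{\vee}$ together with $a_j(x)=0$ we obtain $s_j x=x$, so that $(s_j f)(x)=f(x)$ and therefore $(\widehat{T}_j s_j f)(x)=T_j.f(x)$. Adding the two contributions gives $(\rho(T_j)f)(x)=T_j.f(x)$, as claimed.

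The only point that demands any care is the direction of the $W$-action on $F(L,M)$, namely that $s_j$ enters through $f(s_j^{-1}x)$ rather than $f(s_j x)$; because $s_j$ is its own inverse this distinction is harmless here, but it is the sole place where an orientation error could creep in.
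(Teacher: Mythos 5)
Your proof is correct and is exactly the argument the paper intends: the paper offers no written proof beyond remarking that the proposition ``follows from \eqref{eq:integration-op},'' and your computation---the middle case of \eqref{eq:integration-op} kills $(\widehat{I}_{j}f)(x)$, while $s_{j}x=x$ (since $s_{j}x=x-a_{j}(x)a_{j}^{\vee}$ and $a_{j}(x)=0$) reduces $(\widehat{T}_{j}s_{j}f)(x)$ to $T_{j}.f(x)$---is precisely the unwinding being left to the reader. Your attention to the direction of the $W$-action $(wf)(x)=f(w^{-1}x)$ is appropriate, and as you note it is immaterial here because $s_{j}$ is an involution fixing $x$.
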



\section{Discrete Hamiltonian}\label{sec:Hamiltonian}

\subsection{Discrete Hamiltonian and Propagation operator}

For $1 \le i \le k$ and $x \in L$, 
we define $T_{i}^{(\pm)}(x) \in \mathcal{H}_{k}$ by 
\begin{align*}
T_{i}^{(-)}(x)&:=T_{w_{x}}^{-1}\left(
T_{\sigma_{x}(i)-1}^{-1}\cdots T_{\sigma_{x}(i)-d_{i}^{-}(x)}^{-1}
\right) 
\left(
T_{\sigma_{x}(i)-d_{i}^{-}(x)}^{-1} \cdots T_{\sigma_{x}(i)-1}^{-1}
\right) 
T_{w_{x}}, \\ 
T_{i}^{(+)}(x)&:=T_{w_{x}}^{-1}\left(
\sum_{j=\sigma_{x}(i)}^{\sigma_{x}(i)+d_{i}^{+}(x)-1}
\left(T_{\sigma_{x}(i)}^{-1} \cdots T_{j-1}^{-1} \right)T_{j}^{-1}
\left(T_{j-1} \cdots T_{\sigma_{x}(i)} \right)
\right) T_{w_{x}}. 
\end{align*}

\begin{defn}
Let $M$ be a left $\mathcal{H}_{k}$-module. 
We define the discrete Hamiltonian $H : F(L, M) \to F(L, M)$ by 
\begin{align*}
(Hf)(x):=\sum_{i=1}^{k}q^{d_{i}^{-}(x)}T_{i}^{(-)}(x).\left( 
f(x-v_{i})-\alpha \, T_{i}^{(+)}(x).f(x)
\right) \quad (f \in F(L, M), x \in L),  
\end{align*}
where $.$ means the left action of $\mathcal{H}_{k}$ on $M$. 
\end{defn} 

Next we define the propagation operator. 
Let $w=s_{i_{1}} \cdots s_{i_{l}}$ be a reduced expression of $w \in W$.   
Then we set $T_{w}:=T_{i_{1}}\cdots T_{i_{l}}$. 

\begin{defn}
Let $M$ be a left $\mathcal{H}_{k}$-module. 
We define the \textit{propagation operator} $G: F(L, M) \to F(L, M)$ by 
\begin{align*}
(Gf)(x):=T_{w_{x}}^{-1}.\left((\rho(T_{w_{x}})f)(w_{x}x)\right) 
\quad (f \in F(L, M), x \in L), 
\end{align*} 
where $.$ means the action of $\mathcal{H}_{k}$ on $M$. 
\end{defn}

\begin{thm}
It holds that $HG=G(\sum_{i=1}^{k}t_{i})$. 
Therefore if $f$ is an eigenfunction of $\sum_{i=1}^{k}t_{i}$, then 
$G(f)$ is that of the discrete Hamiltonian $H$ with the same eigenvalue.  
\end{thm}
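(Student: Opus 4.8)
The plan is to verify the identity pointwise: for every $x\in L$ I would compute $(HGf)(x)$ and $\big(G\sum_{i}t_{i}f\big)(x)$ and check that they agree. The decisive simplification is that $\sum_{i=1}^{k}X_{i}$ is \emph{central} in $\mathcal{A}_{k}$. Indeed, $X_{j}$ commutes with $T_{i}$ whenever $j\neq i,i+1$, while the relations $X_{i+1}T_{i}-T_{i}X_{i}=(1-q)X_{i+1}+\alpha$ and $T_{i}X_{i+1}-X_{i}T_{i}=(1-q)X_{i+1}+\alpha$ give $T_{i}X_{i}=X_{i+1}T_{i}-(1-q)X_{i+1}-\alpha$ and $T_{i}X_{i+1}=X_{i}T_{i}+(1-q)X_{i+1}+\alpha$, whence $[T_{i},X_{i}+X_{i+1}]=0$ for every $i$. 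Since the $X_{j}$ commute among themselves, $\sum_{j}X_{j}$ is central, and applying $\rho$ the operator $\sum_{i}t_{i}=\rho(\sum_{i}X_{i})$ commutes with every $\rho(T_{w})$.

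Using centrality I would rewrite the right-hand side. Writing $z:=w_{x}x\in L_{+}$ and $g:=\rho(T_{w_{x}})f$, the definition of $G$ together with $\rho(T_{w_{x}})\sum_{i}t_{i}=\sum_{i}t_{i}\rho(T_{w_{x}})$ gives
\begin{align*}
\Big(G\sum_{i}t_{i}f\Big)(x)=T_{w_{x}}^{-1}.\sum_{i=1}^{k}g(z-v_{i}).
\end{align*}
Since $w_{x}v_{i}=v_{\sigma_{x}(i)}$ and $\sigma_{x}$ is a bijection, one has $z-v_{\sigma_{x}(i)}=w_{x}(x-v_{i})$, so reindexing the sum by $i\mapsto\sigma_{x}(i)$ yields
\begin{align*}
\Big(G\sum_{i}t_{i}f\Big)(x)=T_{w_{x}}^{-1}.\sum_{i=1}^{k}\big(\rho(T_{w_{x}})f\big)\big(w_{x}(x-v_{i})\big).
\end{align*}
It therefore suffices to match, for each fixed $i$, the summand $T_{w_{x}}^{-1}.(\rho(T_{w_{x}})f)(w_{x}(x-v_{i}))$ with the $i$-th term $q^{d_{i}^{-}(x)}T_{i}^{(-)}(x).\big((Gf)(x-v_{i})-\alpha\,T_{i}^{(+)}(x).(Gf)(x)\big)$ of $(HGf)(x)$, since both sums isolate the neighbour $x-v_{i}$.

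The core of the argument is this per-neighbour matching, where I would invoke Proposition~\ref{prop:shortest-shift}. Setting $y:=x-v_{i}$, equation \eqref{eq:main} provides reduced chains $A:=s_{\sigma_{y}(i)-d_{i}^{-}(y)}\cdots s_{\sigma_{y}(i)-1}$ and $B:=s_{\sigma_{x}(i)+d_{i}^{+}(x)-1}\cdots s_{\sigma_{x}(i)}$ with $Aw_{y}=Bw_{x}$ and $\ell(A)+\ell(w_{y})=\ell(B)+\ell(w_{x})$, so that $T_{w_{y}}=T_{A}^{-1}T_{B}T_{w_{x}}$ in $\mathcal{H}_{k}$. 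This lets me re-express $(Gf)(x-v_{i})=T_{w_{y}}^{-1}.\big((\rho(T_{w_{y}})f)(w_{y}y)\big)$ in the frame of $w_{x}$, converting $T_{w_{y}}$ into $T_{w_{x}}$ and the evaluation point $w_{y}y$ into a point on the $W$-orbit of $w_{x}y=z-v_{\sigma_{x}(i)}$. The surviving $T_{B}$-factor, expanded through $\rho(T_{l})=\widehat{T}_{l}s_{l}+\widehat{I}_{l}$, is what produces the Hecke element $T_{i}^{(+)}(x)$ (note that $B$ runs over exactly the indices $\sigma_{x}(i)\le l<\sigma_{x}(i)+d_{i}^{+}(x)$ appearing in $T_{i}^{(+)}(x)$, where $a_{l}(z)=0$ by \eqref{eq:main}), while the ``down'' chain together with the length balance $d_{i}^{-}(y)+\ell(w_{y})=d_{i}^{+}(x)+\ell(w_{x})$ yields the factor $q^{d_{i}^{-}(x)}$ and the conjugated element $T_{i}^{(-)}(x)$.

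The genuinely laborious step — and the one I expect to be the main obstacle — is extracting the two separate contributions in the previous paragraph. When $z-v_{\sigma_{x}(i)}$ fails to be dominant (equivalently $d_{i}^{+}(x)>0$), applying $\rho(T_{B})$ to $f$ near $z$ is governed by the integral–reflection operators $\widehat{I}_{l}$, and it is precisely the numerators $\alpha e^{v_{l+1}}+1-q$ built into the $\widehat{I}_{l}$ that split each contribution into a genuine shift term, matching $q^{d_{i}^{-}(x)}T_{i}^{(-)}(x).(Gf)(x-v_{i})$, and an $\alpha$-proportional on-site term, matching $-\alpha\,q^{d_{i}^{-}(x)}T_{i}^{(-)}(x)T_{i}^{(+)}(x).(Gf)(x)$; here Proposition~\ref{prop:action-trivial} and the commutation relations of Lemma~\ref{lem:comm-rel} (especially parts (i) and (iii)) collapse the $\widehat{I}_{l}$ and $s_{l}$ factors into the stated Hecke products. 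Verifying the reducedness assertions needed to pass from $T$ of a product of Weyl elements to the product of the $T$'s, and checking that the $q$-powers and the explicit forms of $T_{i}^{(\pm)}(x)$ emerge exactly as written, is the bulk of the work; the geometric input is supplied entirely by Lemma~\ref{lem:I} and Propositions~\ref{prop:shortest-prod} and~\ref{prop:shortest-shift}.
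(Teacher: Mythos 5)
Your frame is the paper's: you invoke Proposition~\ref{prop:shortest-shift} to relate $T_{w_{y}}$ to $T_{w_{x}}$, Proposition~\ref{prop:action-trivial} to strip the chain of $\rho(T_{j})$'s at walls, the relation $X_{j+1}T_{j}=T_{j}X_{j}+\alpha+(1-q)X_{j+1}$ to move the shift operator, and the centrality of $\sum_{i}X_{i}$ at the end. But your pivotal reduction --- that it suffices to match, for each fixed $i$, the summand $T_{w_{x}}^{-1}.\bigl(\rho(T_{w_{x}})f\bigr)\bigl(w_{x}(x-v_{i})\bigr)$ with the $i$-th term $q^{d_{i}^{-}(x)}T_{i}^{(-)}(x).\bigl((Gf)(x-v_{i})-\alpha\,T_{i}^{(+)}(x).(Gf)(x)\bigr)$ of $(HGf)(x)$ --- is false whenever $x$ lies on a wall, i.e.\ $d_{i}^{+}(x)>0$ for some $i$. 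Commuting $X_{\sigma_{x}(i)+d_{i}^{+}(x)}$ through the chain produces, besides the $X_{\sigma_{x}(i)}$-term and the $\alpha$-term (which does cancel per $i$ against $-\alpha\,T_{i}^{(+)}(x).(Gf)(x)$, as in \eqref{eq:main-05}), a string of $(1-q)$-remainders involving $\bigl(\rho(X_{j+1}T_{w_{x}})f\bigr)(w_{x}x)$ for every $j$ with $\sigma_{x}(i)\le j<\sigma_{x}(i)+d_{i}^{+}(x)$; these are shifts at \emph{other} positions of the same cluster, so the $i$-th term of $HGf$ is not supported on the single neighbour $x-v_{i}$, contrary to your claim. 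A concrete check: take $k=2$, $x=0$. The $i=1$ term of $(HGf)(x)$ is $(\rho(X_{1})f)(x)+(1-q)\,T_{1}^{-1}.(\rho(X_{2})f)(x)$, not $(\rho(X_{1})f)(x)$, and the $i=2$ term is $q\,T_{1}^{-2}.(\rho(X_{2})f)(x)$, not $(\rho(X_{2})f)(x)$; equality only appears after adding them, via $q\,T_{1}^{-2}=1-(1-q)T_{1}^{-1}$.

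What is missing is therefore the combinatorial heart of the paper's proof: one must sum over each cluster $J_{a}$ of $w_{x}x$, collect all contributions into the element $K_{J}$, and show $K_{J}=\sum_{i\in J}X_{i}$ by repeated use of $qT_{j}^{-1}+(1-q)=T_{j}$. The prefactors $q^{d_{i}^{-}(x)}$ and the conjugating products inside $T_{i}^{(\pm)}(x)$ are precisely tuned so that this telescoping cancellation of the $(1-q)$-cross-terms works; they cannot be recovered from a per-neighbour transformation law for $G$, because no such termwise law holds. (A minor further slip: Lemma~\ref{lem:comm-rel} plays no role in this theorem's proof --- it is only needed to establish that $\rho$ is an algebra homomorphism; the collapsing at walls is done entirely by Proposition~\ref{prop:action-trivial}.) Your outline handles correctly the generic stratum where all $d_{i}^{+}(x)=0$ and each cluster is a singleton, but on degenerate configurations --- the only nontrivial case --- the argument as stated breaks down.
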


\begin{proof}
Suppose that $f \in F(L, M)$ and $x \in L$. 
First we fix $i \, (1 \le i \le k)$ and calculate $(Gf)(x-v_{i})$. 
Set $y=x-v_{i}$. From Proposition \ref{prop:shortest-shift}, we have 
\begin{align}
T_{w_{y}}^{-1}= 
T_{w_{x}}^{-1} (T_{\sigma_{x}(i)}^{-1} \cdots T_{\sigma_{x}(i)+d_{i}^{+}(x)-1}^{-1})
(T_{\sigma_{y}(i)-d_{i}^{-}(y)} \cdots T_{\sigma_{y}(i)-1}). 
\label{eq:main-0}
\end{align}
Using Proposition \ref{prop:action-trivial}, we see that 
\begin{align*}
(Gf)(x-v_{i})=T_{w_{x}}^{-1} (T_{\sigma_{x}(i)}^{-1} \cdots T_{\sigma_{x}(i)+d_{i}^{+}(x)-1}^{-1}). 
\left( \rho(T_{\sigma_{x}(i)+d_{i}^{+}(x)-1} \cdots T_{\sigma_{x}(i)})f
\right)\!(w_{y}y).  
\end{align*}
{}From \eqref{eq:main} it holds that $w_{y}y=w_{x}x-v_{\sigma_{x}(i)+d_{i}^{+}(x)}$. 
Therefore 
\begin{align*}
&
(Gf)(x-v_{i}) \\ 
&=T_{w_{x}}^{-1} (T_{\sigma_{x}(i)}^{-1} \cdots T_{\sigma_{x}(i)+d_{i}^{+}(x)-1}^{-1}). 
\left( \rho(X_{\sigma_{x}(i)+d_{i}^{+}(x)}T_{\sigma_{x}(i)+d_{i}^{+}(x)-1} \cdots T_{\sigma_{x}(i)})f
\right)\!(w_{x}x).  
\end{align*}
Move $X_{\sigma_{x}(i)+d_{i}^{+}(x)}$ to the right using the relation 
$X_{j+1}T_{j}=T_{j}X_{j}+\alpha+(1-q)X_{j+1}$, and 
use Proposition \ref{prop:action-trivial} to remove 
$\rho(T_{j}) \, (\sigma_{x}(i)\le j <\sigma_{x}(i)+d_{i}^{+}(x))$. 
As a result we get 
\begin{align}
& 
(Gf)(x-v_{i})=
T_{w_{x}}^{-1}.(\rho(X_{\sigma_{x}(i)}T_{w_{x}})f)(w_{x}x)+\alpha \, T_{i}^{(+)}(x).G(f)(x) 
\label{eq:main-05} \\ 
&+(1-q)\sum_{j=\sigma_{x}(i)}^{\sigma_{x}(i)+d_{i}^{+}(x)-1}
T_{w_{x}}^{-1}(T_{\sigma_{x}(i)}^{-1}\cdots T_{j-1}^{-1})T_{j}^{-1}(T_{j-1}\cdots T_{\sigma_{x}(i)}).
(\rho(X_{j+1}T_{w_{x}})f)(w_{x}x).  
\nonumber 
\end{align}

Now let us calculate $(HG(f))(x)$. 
The set $\{1, 2, \ldots , k\}$ is decomposed into a direct sum of sub-intervals 
$J_{1}, \ldots , J_{r}$ so that $i \in J_{a}$ and $j \in J_{a}$ if and only if 
$\epsilon_{i}(w_{x}x)=\epsilon_{j}(w_{x}x)$ for $a=1, 2, \ldots , r$. 
For an interval $J=\{l, l+1, \ldots , m\}$, set 
\begin{align*}
K_{J}&:=\sum_{i=l}^{m}q^{i-l}
(T_{i-1}^{-1}\cdots T_{l}^{-1})(T_{l}^{-1} \cdots T_{i-1}^{-1}) \\ 
& \quad {}\times 
\left\{X_{i}+(1-q)\sum_{j=i}^{m-1}(T_{i}^{-1}\cdots T_{j-1}^{-1})T_{j}^{-1}(T_{j-1}\cdots T_{i})X_{j+1}\right\}
\end{align*}
Change the index $i$ in the definition of $H$ to $\sigma_{x}(i)$ and 
rewrite $(HG(f))(x)$ using \eqref{eq:d-change} and \eqref{eq:main-05}.   
Then we obtain 
\begin{align*}
(HG(f))(x)=\sum_{a=1}^{r}T_{w_{x}}^{-1}.\left(\rho(K_{J_{a}}T_{w_{x}})f\right)\!(w_{x}x).  
\end{align*}
On the other hand, $K_{J}$ is rewritten as 
\begin{align*}
K_{J}=\sum_{i=l}^{m}&\bigg\{
q^{i-l}(T_{i-1}^{-1}\cdots T_{l}^{-1})(T_{l}^{-1}\cdots T_{i-1}^{-1}) \\ 
&+(1-q)\sum_{j=l}^{i-1}q^{j-l}
(T_{i-1}\cdots T_{j+1})(T_{j-1}^{-1}\cdots T_{l}^{-1})
(T_{l}^{-1} \cdots T_{i-1}^{-1})
\bigg\}X_{i}.  
\end{align*}
{}From the above expression, we see that $K_{J}=\sum_{i=l}^{m}X_{i}$ 
using the relation $qT_{j}^{-1}+(1-q)=T_{j}$. 
Thus we get 
\begin{align*}
(HG(f))(x)=T_{w_{x}}^{-1}.\left(\rho({\textstyle \sum_{i=1}^{k}X_{i}}T_{w_{x}})f\right)\!(w_{x}x).   
\end{align*}
Since $\sum_{i=1}^{k}X_{i}$ commutes with $T_{i} \, (1\le i<k)$, 
the right hand side is equal to 
\begin{align*}
T_{w_{x}}^{-1}.\left(\rho(T_{w_{x}})\rho({\textstyle \sum_{i=1}^{k}X_{i}})f\right)\!(w_{x}x)
=(G({\textstyle \sum_{i=1}^{k}t_{i}})f)(x).    
\end{align*}
This completes the proof. 
\end{proof}

\subsection{Invariant subspace}

\begin{defn}\label{defn:invariant-subspace}
Let $M$ be a left $\mathcal{H}_{k}$-module.  
We denote by $F_{0}(L, M)$ the subspace of $F(L, M)$ 
consisting of the functions $f: L \to M$ satisfying 
\begin{align*}
f(s_{i}x)=\left\{ 
\begin{array}{ll}
T_{i}^{-1}.f(x) &  (a_{i}(x)\ge 0) \\
T_{i}.f(x) & (a_{i}(x)<0)
\end{array}
\right.
\end{align*}
for $1\le i<k$ and $x \in L$. 
\end{defn}

Note that if $f \in F_{0}(L, M)$ then 
\begin{align}
f(x)=T_{w_{x}}^{-1}.f(w_{x}x) 
\label{eq:symmetric-value}
\end{align} 
for $x \in L$. 

In this subsection we prove the following theorem. 

\begin{thm}\label{thm:invariance}
Let $M$ be a left $\mathcal{H}_{k}$-module. 
Then it holds that $H(F_{0}(L, M)) \subset F_{0}(L, M)$. 
\end{thm}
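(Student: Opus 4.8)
The plan is to show that $Hf$ satisfies the defining relations of $F_0(L,M)$ whenever $f$ does, i.e.\ to verify that for $1\le i<k$ and $x\in L$,
\begin{align*}
(Hf)(s_ix)=\begin{cases} T_i^{-1}.(Hf)(x) & (a_i(x)\ge 0)\\ T_i.(Hf)(x) & (a_i(x)<0).\end{cases}
\end{align*}
Since $s_i$ is an involution, it suffices to treat the case $a_i(x)\ge 0$; the case $a_i(x)<0$ follows by applying this to $s_ix$ in place of $x$. Moreover the case $a_i(x)=0$ should be handled separately, because then $s_ix=x$ and the claimed identity becomes $(Hf)(x)=T_i^{-1}.(Hf)(x)$, i.e.\ $T_i.(Hf)(x)=(Hf)(x)$; this must be checked as its own (degenerate but essential) statement. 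So the crux is the strict case $a_i(x)>0$.

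The most promising route is to avoid reproving invariance from the explicit formula for $H$ and instead use the intertwining identity $HG=G\sum_{j=1}^k t_j$ proved in the previous theorem, together with the compatibility between $G$ and the subspace $F_0(L,M)$. First I would establish that $G$ surjects onto $F_0(L,M)$, or at least that $G$ maps into $F_0(L,M)$ and restricts to the identity there: from the definition $(Gf)(x)=T_{w_x}^{-1}.\bigl((\rho(T_{w_x})f)(w_xx)\bigr)$ and the relation \eqref{eq:symmetric-value}, one expects that $Gf$ always lies in $F_0(L,M)$, and that $G$ acts as the identity on $F_0(L,M)$ itself. Granting this, invariance of $F_0(L,M)$ under $H$ is immediate: for $f\in F_0(L,M)$ we have $Gf=f$, so $Hf=HGf=G\bigl(\sum_j t_j f\bigr)\in \mathrm{Im}\,G\subset F_0(L,M)$. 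Thus the theorem reduces to two lemmas about $G$: that $\mathrm{Im}\,G\subset F_0(L,M)$ and that $G|_{F_0(L,M)}=\mathrm{id}$.

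To prove $Gf\in F_0(L,M)$ I would verify the defining relation directly for $g:=Gf$. Fix $i$ with $a_i(x)\ge 0$. The key structural input is Proposition~\ref{prop:shortest-prod}: when $a_i(x)>0$ one has $w_{s_ix}=w_xs_i$ with $\ell(w_{s_ix})=\ell(w_x)+1$, so $T_{w_{s_ix}}=T_{w_x}T_i$ and, since $s_ix$ and $x$ lie in the same $W$-orbit, $w_{s_ix}(s_ix)=w_xx$. Feeding this into the definition of $G$ and using $T_{w_{s_ix}}^{-1}=T_i^{-1}T_{w_x}^{-1}$ should give $g(s_ix)=T_i^{-1}.g(x)$ after the $\rho(T_{w_x})$ factors cancel. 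When $a_i(x)=0$, Proposition~\ref{prop:shortest-prod}(ii) gives $w_xs_i=s_{\sigma_x(i)}w_x$ with a length increase, and here the relation to check is the $a_i(x)=0$ degenerate identity $T_i.g(x)=g(x)$; this is where Proposition~\ref{prop:action-trivial} enters, since $a_i(x)=0$ lets one replace $\rho(T_i)$ by the bare $\mathcal{H}_k$-action $T_i$ at the relevant point. The identity $G|_{F_0(L,M)}=\mathrm{id}$ is then a short computation: for $f\in F_0(L,M)$, Proposition~\ref{prop:action-trivial} collapses $(\rho(T_{w_x})f)(w_xx)$ to $T_{w_x}.f(w_xx)$ because $w_xx\in L_+$ means all the relevant simple reflections act on arguments with $a_j(w_xx)\ge 0$, and then \eqref{eq:symmetric-value} yields $(Gf)(x)=T_{w_x}^{-1}.T_{w_x}.f(w_xx)=f(x)$.

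The main obstacle I anticipate is the bookkeeping in showing $Gf\in F_0(L,M)$, specifically controlling how $\rho(T_{w_x})$ interacts with the shift from $x$ to $s_ix$ and confirming that the $\widehat{I}_j$ contributions inside $\rho(T_j)=\widehat{T}_js_j+\widehat{I}_j$ either vanish or reassemble correctly. The clean cancellation I am hoping for relies on $\rho(T_{w_x})f$ being evaluated at the common point $w_xx$, so the delicate point is verifying that the reduced-word manipulation $T_{w_{s_ix}}=T_{w_x}T_i$ is compatible with the \emph{operator} $\rho(T_{w_x})$ and not merely with the abstract algebra element; Proposition~\ref{prop:action-trivial}, which trivializes $\rho(T_j)$ exactly when $a_j=0$, is what I expect to resolve this, since along the path from $w_xx$ back to $s_ix$ the relevant roots vanish. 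If the direct verification of $Gf\in F_0(L,M)$ proves too intricate, the fallback is to prove invariance straight from the explicit formula for $H$ by substituting the relations $d_i^\pm(s_ix)$, $\sigma_{s_ix}$, $T_i^{(\pm)}(s_ix)$ in terms of their values at $x$ (again via Proposition~\ref{prop:shortest-prod}) and checking the two cases by hand, but this is the calculation I would rather route around.
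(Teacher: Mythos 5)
Your primary route contains a genuine gap: both auxiliary lemmas about $G$ on which it rests are false. Proposition~\ref{prop:action-trivial} trivializes $\rho(T_j)$ at a point $y$ only when $a_j(y)=0$ \emph{exactly}; your appeal to ``$a_j(w_xx)\ge 0$'' conflates this with the weak inequality, and when $a_j(y)>0$ the integral part $\widehat{I}_j$ of $\rho(T_j)=\widehat{T}_js_j+\widehat{I}_j$ contributes through \eqref{eq:integration-op} and does not vanish on $F_0(L,M)$. Concretely, take $k=2$ and the one-dimensional $\mathcal{H}_2$-module $T_1\mapsto 1$, so that $F_0(L,M)$ is just the space of symmetric scalar functions. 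For $a_1(x)<0$ one has $w_x=s_1$ and
\begin{align*}
(Gf)(x)=f(x)+\sum_{l=0}^{-a_1(x)-1}\bigl(\alpha f(s_1x-la_1^{\vee}+v_2)+(1-q)f(s_1x-la_1^{\vee})\bigr),
\end{align*}
which for $f\equiv 1$ and $\alpha=0$ equals $f(x)+(1-q)\,|a_1(x)|\ne f(x)$; so $G|_{F_0(L,M)}\ne\mathrm{id}$. Likewise $\mathrm{Im}\,G\not\subset F_0(L,M)$: in the same module, if $f$ is supported at a single point $x_0$ with $a_1(x_0)>0$, then $(Gf)(x_0)=f(x_0)$ while $(Gf)(s_1x_0)=(1-q)f(x_0)$, so $Gf$ is not symmetric. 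What the paper actually proves is only $G\bigl(F(L,M)^{\mathcal{H}_k}\bigr)\subset F_0(L,M)$, i.e.\ the image of the $\rho(\mathcal{H}_k)$-\emph{invariants}, and $F_0(L,M)$ is not known to lie in that image --- that would amount to a completeness statement for the Bethe wave functions, which the paper explicitly leaves as an open (Plancherel) problem. Hence the reduction $Hf=HGf=G(\sum_j t_jf)$ is unavailable; indeed, if $G$ fixed $F_0(L,M)$ pointwise, the spectral analysis of $H|_{F_0(L,M)}$ would trivialize, a sign the claimed lemmas are too strong.

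Your fallback is the paper's actual route, but it is not executed, and the substantive work lies exactly there. The paper first proves Proposition~\ref{prop:Hamiltonian-on-symmetric-space}, rewriting $(Hf)(x)$ for $f\in F_0(L,M)$ in the form \eqref{eq:Hamiltonian-on-symmetric-space-formula}, where the dependence on $x$ enters only through $T_{w_x}^{-1}$ and data of $w_xx$ (cluster coordinates); the cases $a_i(x)>0$ and $a_i(x)<0$ then follow from $T_{w_{s_ix}}=T_{w_x}T_i$ exactly as you anticipate. The crux is the degenerate case $a_i(x)=0$, which you correctly flag as essential but leave unproved: in the paper this is Lemma~\ref{lem:invariance}, an explicit computation with the quadratic relation $q^{-1}T_j^{-2}=1-(1-q)T_j^{-1}$ showing that $T_{p+i}^{-1}$ fixes $J.f(x-v_{p+c})$ for $J=\sum_{l=1}^{c}q^{c-l}T_{p+l}^{-1}\cdots T_{p+c-1}^{-1}$, together with the preparatory Lemma~\ref{lem:Hamiltonian-on-symmetric-space}; Proposition~\ref{prop:action-trivial} alone does not deliver this. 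As written, the proposal therefore does not prove the theorem: its main reduction is unsound, and its fallback omits the two lemmas that constitute the actual proof.
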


For that purpose, we rewrite the formula of $Hf \, (f \in F_{0}(L, M))$. 
For $x \in L_{+}$, we define the {\it cluster coordinate} $(c_{1}, c_{2}, \ldots , c_{r})$ of $x$ 
by the property that $\sum_{a=1}^{r}c_{a}=k$, 
$\epsilon_{c_{1}}(x)>\epsilon_{c_{1}+c_{2}}(x)>\cdots >\epsilon_{c_{1}+\cdots +c_{r}}(x)$ and 
$\epsilon_{j}(x)=\epsilon_{c_{1}+\cdots +c_{a}}(x)$ if 
$c_{1}+\cdots +c_{a-1}<j\le c_{1}+\cdots +c_{a}$. 
For example, if $k=5$ and $x=2v_{1}+2v_{2}-v_{3}-3v_{4}-3v_{5}$, then 
the cluster coordinate of $x$ is $(2, 1, 2)$. 

\begin{prop}\label{prop:Hamiltonian-on-symmetric-space}
Let $M$ be a left $\mathcal{H}_{k}$-module. 
Suppose that $f \in F_{0}(L, M)$ and $x \in L$. 
Let $(c_{1}, \ldots , c_{r})$ be the cluster coordinate of $w_{x}x$. 
Then it holds that 
\begin{align}
(Hf)(x)&=T_{w_{x}}^{-1}
\sum_{a=1}^{r}\sum_{l=1}^{c_{a}}q^{c_{a}-l}
(T_{c_{1}+\cdots +c_{a-1}+l}^{-1} \cdots T_{c_{1}+\cdots +c_{a}-1}^{-1}).
f(w_{x}x-v_{c_{1}+\cdots +c_{a}}) 
\label{eq:Hamiltonian-on-symmetric-space-formula} \\ 
&-\frac{\alpha}{1-q} \sum_{a=1}^{r}(c_{a}-[c_{a}]_{q})f(x), 
\nonumber  
\end{align} 
where $[n]_{q}:=(1-q^{n})/(1-q)$ is the $q$-integer. 
\end{prop}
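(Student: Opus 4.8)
The plan is to reduce the assertion to the case of dominant $x$ and then to evaluate $H$ cluster by cluster. The fact I will use constantly is what I will call the \emph{wall relation}: if $f\in F_0(L,M)$ and $a_j(z)=0$, then $s_jz=z$, so the defining relation of $F_0(L,M)$ forces $T_j.f(z)=f(z)$ and $T_j^{-1}.f(z)=f(z)$.

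First I would prove that $(Hf)(x)=T_{w_x}^{-1}.(Hf)(w_xx)$ for every $f\in F_0(L,M)$. From the definitions together with \eqref{eq:d-change} one has $d_i^{\pm}(x)=d_{\sigma_x(i)}^{\pm}(w_xx)$ and $T_i^{(\pm)}(x)=T_{w_x}^{-1}T_{\sigma_x(i)}^{(\pm)}(w_xx)T_{w_x}$, so after substituting into the definition of $H$ and using $T_{w_x}.f(x)=f(w_xx)$ from \eqref{eq:symmetric-value} I can pull $T_{w_x}^{-1}$ to the front and reindex by $n=\sigma_x(i)$. The nontrivial input is $T_{w_x}.f(x-v_i)$: with $y=x-v_i$, combining \eqref{eq:symmetric-value} and \eqref{eq:main-0} gives $T_{w_x}.f(x-v_i)=(T_{\sigma_x(i)}^{-1}\cdots T_{\sigma_x(i)+d_i^+(x)-1}^{-1})(T_{\sigma_y(i)-d_i^-(y)}\cdots T_{\sigma_y(i)-1}).f(w_yy)$, and by the wall relation applied with the vanishing $a_j(w_yy)=0$ of Proposition \ref{prop:shortest-shift} the positive word on the right acts trivially on $f(w_yy)=f(w_xx-v_{\sigma_x(i)+d_i^+(x)})$ (here I use $w_yy=w_xx-v_{\sigma_x(i)+d_i^+(x)}$ from \eqref{eq:main}). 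With these identities the $n$-th summand becomes exactly the $n$-th summand of $(Hf)(w_xx)$, proving the reduction; it then suffices to treat dominant $x$.

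For dominant $x$ we have $w_x=e$, $\sigma_x=\mathrm{id}$, and $\{1,\dots,k\}$ is the disjoint union of the clusters $J_a=\{\beta_a+1,\dots,\beta_a+c_a\}$, $\beta_a=c_1+\cdots+c_{a-1}$, with $d_{\beta_a+s}^-(x)=s-1$ and $d_{\beta_a+s}^+(x)=c_a-s$. I treat the two terms of $H$ separately. In the $\alpha$-term every generator occurring in $T_i^{(\pm)}(x)$, $i\in J_a$, is interior to $J_a$, where $a_j(x)=0$; so the wall relation collapses $T_i^{(+)}(x).f(x)=d_i^+(x)\,f(x)$ and then $T_i^{(-)}(x).f(x)=f(x)$, whence $\sum_{i\in J_a}q^{d_i^-(x)}d_i^+(x)=\sum_{s=1}^{c_a}q^{s-1}(c_a-s)=(c_a-[c_a]_q)/(1-q)$ by an elementary $q$-series identity, reproducing the second term of \eqref{eq:Hamiltonian-on-symmetric-space-formula}.

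For the shift term, fix a cluster and abbreviate its indices to $\{1,\dots,c\}$, writing $\phi_s:=f(x-v_s)$. Applying the $F_0(L,M)$ relation along the cluster gives $\phi_{s+1}=T_s.\phi_s$, hence $\phi_s=(T_{s-1}\cdots T_1).\phi_1$, while the wall relation gives $T_j.\phi_l=\phi_l$ whenever $j\notin\{l-1,l\}$. The key simplification is the telescoping
\[
T_s^{(-)}(x).\phi_s=(T_{s-1}^{-1}\cdots T_1^{-1})(T_1^{-1}\cdots T_{s-1}^{-1})(T_{s-1}\cdots T_1).\phi_1=(T_{s-1}^{-1}\cdots T_1^{-1}).\phi_1,
\]
the inner word cancelling outright. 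Expanding $(T_{s-1}^{-1}\cdots T_1^{-1}).\phi_1$ by induction on $s$, using $T_j^{-2}=(1-q^{-1})T_j^{-1}+q^{-1}$ and the wall relations, yields $(1-q^{-1})\sum_{l=1}^{s-1}q^{-(l-1)}\phi_l+q^{-(s-1)}\phi_s$; a geometric resummation then gives $\sum_{s=1}^{c}q^{s-1}(T_{s-1}^{-1}\cdots T_1^{-1}).\phi_1=\sum_{l=1}^{c}q^{c-l}\phi_l$, and rewriting $\phi_l=(T_l^{-1}\cdots T_{c-1}^{-1}).\phi_c$ restores the clusterwise first term of \eqref{eq:Hamiltonian-on-symmetric-space-formula}.

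I expect the main obstacle to be the bookkeeping in the reduction to the dominant chamber, specifically checking that the positive Hecke word $T_{\sigma_y(i)-d_i^-(y)}\cdots T_{\sigma_y(i)-1}$ fixes $f(w_yy)$, where Proposition \ref{prop:shortest-shift} and the wall relation must be combined correctly. Once $x$ is dominant the clusterwise computation is essentially forced by the telescoping identity and the two elementary $q$-series identities above.
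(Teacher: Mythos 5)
Your proposal is correct and takes essentially the same route as the paper: your reduction step is the paper's own manipulation via \eqref{eq:symmetric-value} and \eqref{eq:main-0} (with Proposition \ref{prop:shortest-shift} supplying the wall vanishings $a_j(w_yy)=0$), and your cluster-local computation --- the cancellation $T_s^{(-)}(x).\phi_s=(T_{s-1}^{-1}\cdots T_1^{-1}).\phi_1$, the expansion via $qT_j^{-1}=T_j-(1-q)$, the resummation to $\sum_{l=1}^{c}q^{c-l}\phi_l$, and the rewriting $\phi_l=(T_l^{-1}\cdots T_{c-1}^{-1}).\phi_c$ --- is precisely the statement and proof of Lemma \ref{lem:Hamiltonian-on-symmetric-space}. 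The only difference is packaging: you factor the argument through the explicit intertwining $(Hf)(x)=T_{w_x}^{-1}.(Hf)(w_xx)$ with $T_i^{(\pm)}(x)=T_{w_x}^{-1}T_{\sigma_x(i)}^{(\pm)}(w_xx)T_{w_x}$ and then work in the dominant chamber, whereas the paper carries out the same Hecke-word manipulations directly for general $x$.
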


In the proof of Proposition \ref{prop:Hamiltonian-on-symmetric-space}, 
we use the following formula.  

\begin{lem}\label{lem:Hamiltonian-on-symmetric-space} 
Suppose that $f \in F_{0}(L, M), x \in L_{+}$ and $0\le p<p+c \le k$. 
If $a_{j}(x)=0$ for $p+1 \le j<p+c$, then it holds that 
\begin{align}
&
\sum_{l=1}^{c}q^{l-1}
(T_{p+l-1}^{-1} \cdots T_{p+2}^{-1}T_{p+1}^{-1}).
f(x-v_{p+1}) 
\label{eq:Hamiltonian-on-symmetric-space-lem} \\ 
&=\sum_{l=1}^{c}q^{c-l}  
(T_{p+l}^{-1}T_{p+l+1}^{-1} \cdots T_{p+c-1}^{-1}).f(x-v_{p+c}). 
\nonumber 
\end{align} 
\end{lem}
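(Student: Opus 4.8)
The plan is to reduce the stated operator identity to a computation purely in terms of the values $\phi_{j}:=f(x-v_{p+j})\in M$ for $1\le j\le c$, exploiting that the hypothesis $a_{j}(x)=0$ for $p+1\le j<p+c$ forces $\epsilon_{p+1}(x)=\cdots=\epsilon_{p+c}(x)$. First I would record two families of relations coming from Definition \ref{defn:invariant-subspace}. Since $a_{p+j}(x-v_{p+j})=-1<0$ and $s_{p+j}(x-v_{p+j})=x-v_{p+j+1}$, the definition gives $\phi_{j+1}=T_{p+j}.\phi_{j}$ for $1\le j\le c-1$; and since $a_{i}(x-v_{p+j})=0$ whenever $p+1\le i\le p+c-1$ and $i\ne p+j-1,p+j$, the same definition (applied at a fixed point of $s_{i}$, where $a_{i}\ge 0$) gives $T_{i}.\phi_{j}=\phi_{j}$ for those $i$. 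These two facts are the only input from $F_{0}(L,M)$.

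Next I would dispose of the right-hand side. Iterating $\phi_{j+1}=T_{p+j}.\phi_{j}$ yields $\phi_{c}=T_{p+c-1}\cdots T_{p+l}.\phi_{l}$, so the inverse string on the right telescopes:
\[
(T_{p+l}^{-1}\cdots T_{p+c-1}^{-1})(T_{p+c-1}\cdots T_{p+l}).\phi_{l}=\phi_{l}.
\]
Hence the right-hand side of \eqref{eq:Hamiltonian-on-symmetric-space-lem} equals $\sum_{l=1}^{c}q^{c-l}\phi_{l}$, with no further work.

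The substance is the left-hand side. Using the quadratic relation in the form $qT_{p+j}^{-1}=T_{p+j}-(1-q)$ (the relation $qT_{j}^{-1}+(1-q)=T_{j}$ already used above), I set $\bar{u}_{j}:=qT_{p+j}^{-1}=T_{p+j}-(1-q)$ and $\psi_{m}:=\bar{u}_{m}\bar{u}_{m-1}\cdots\bar{u}_{1}.\phi_{1}$, so that the $l$-th summand on the left is exactly $\psi_{l-1}$ and the left-hand side is $\sum_{m=0}^{c-1}\psi_{m}$. I would then prove by induction on $m$ the closed form
\[
\psi_{m}=\phi_{m+1}-(1-q)\sum_{i=1}^{m}q^{m-i}\phi_{i}\qquad(0\le m\le c-1),
\]
the inductive step applying $\bar{u}_{m}=T_{p+m}-(1-q)$ to $\psi_{m-1}$ and invoking $T_{p+m}.\phi_{m}=\phi_{m+1}$ together with $T_{p+m}.\phi_{i}=\phi_{i}$ for $i\le m-1$ (the second relation family). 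Summing over $m$, interchanging the order of summation, and using $(1-q)[c-i]_{q}=1-q^{c-i}$ collapses the double sum to $\sum_{l=1}^{c}q^{c-l}\phi_{l}$, which matches the right-hand side and finishes the proof.

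The routine parts are the telescoping cancellations and the final $q$-integer bookkeeping; the one genuinely load-bearing step is guessing and verifying the closed form for $\psi_{m}$, and in particular keeping precise track of \emph{which} generators $T_{i}$ act as the identity on \emph{which} $\phi_{j}$. The exclusion of the two neighbouring indices $i=p+j-1,p+j$ is exactly what makes the inductive step go through (one needs $m\ne i-1,i$, which holds precisely because $i\le m-1$), and conflating them would break the argument.
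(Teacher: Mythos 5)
Your proof is correct and takes essentially the same route as the paper's: the same two relation families extracted from Definition \ref{defn:invariant-subspace} (namely $\phi_{j+1}=T_{p+j}.\phi_j$ from the $a<0$ branch and $T_i.\phi_j=\phi_j$ at fixed points), the same closed form $q^{l-1}(T_{p+l-1}^{-1}\cdots T_{p+1}^{-1}).\phi_1=\phi_l-(1-q)\sum_{j=1}^{l-1}q^{l-1-j}\phi_j$ (which the paper asserts with ``using $qT_j^{-1}=T_j-(1-q)$ repeatedly'' and you verify by induction), and the same identification $f(x-v_{p+l})=(T_{p+l}^{-1}\cdots T_{p+c-1}^{-1}).f(x-v_{p+c})$ followed by the same $q$-integer bookkeeping. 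The only cosmetic difference is that you get the latter identification by inverting $\phi_{j+1}=T_{p+j}.\phi_j$, while the paper reads it off directly from the $a_j(x-v_{j+1})=1\ge 0$ branch of the definition.
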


\begin{proof}
For $p+1\le j<m \le p+c$, it holds that  
$T_{j}.f(x-v_{j})=f(x-v_{j+1})$ and  
$T_{m}^{-1}.f(x-v_{j})=f(x-v_{j})$ 
because $a_{j}(x-v_{j})=-1<0$ and $a_{m}(x-v_{j})=0$. 
Using $qT_{j}^{-1}=T_{j}-(1-q)$ repeatedly, we see that 
\begin{align*}
q^{l-1}(T_{p+l-1}^{-1} \cdots T_{p+1}^{-1}).f(x-v_{p+1})=f(x-v_{p+l})-(1-q)\sum_{j=1}^{l-1}q^{l-j-1}f(x-v_{p+j}).  
\end{align*} 
Hence the left hand side of \eqref{eq:Hamiltonian-on-symmetric-space-lem} is equal to
$\sum_{l=1}^{c}q^{c-l}f(x-v_{p+l})$. 
Since $a_{j}(x-v_{j+1})=1\ge 0$ for $p+1\le j <p+c$, we have 
$f(x-v_{p+l})=(T_{p+l}^{-1} \cdots T_{p+c-1}^{-1}).f(x-v_{p+c})$ for $1\le l \le c$.  
This completes the proof. 
\end{proof}

\begin{proof}[Proof of Proposition \ref{prop:Hamiltonian-on-symmetric-space}]
Note that $T_{i}.f(x)=f(x)$ if $a_{i}(x)=0$. 
Using \eqref{eq:main} and \eqref{eq:symmetric-value}, we see that 
\begin{align*}
\sum_{i=1}^{k}q^{d_{i}^{-}(x)}T_{i}^{(-)}(x)T_{i}^{(+)}(x).f(x)=
\sum_{i=1}^{k} q^{d_{i}^{-}(x)}d_{i}^{+}(x)f(x)=\frac{1}{1-q}\sum_{a=1}^{r}(c_{a}-[c_{a}]_{q})f(x). 
\end{align*} 
Hence it suffices to show that 
\begin{align}
& 
\sum_{i=1}^{k}q^{d_{i}^{-}(x)}T_{i}^{(-)}(x).f(x-v_{i}) 
\label{eq:Hamiltonian-on-symmetric-space} \\ 
&=T_{w_{x}}^{-1}\sum_{a=1}^{r}\sum_{l=1}^{c_{a}}q^{c_{a}-l}
(T_{c_{1}+\cdots +c_{a-1}+l-1}^{-1} \cdots T_{c_{1}+\cdots +c_{a}-1}^{-1}).
f(w_{x}x-v_{c_{1}+\cdots +c_{a}}).  
\nonumber 
\end{align}

Fix $1\le i \le k$ and set $y=x-v_{i}$. 
{}From \eqref{eq:main-0} and \eqref{eq:symmetric-value} we have 
\begin{align*}
f(y)=T_{w_{x}}^{-1}(T_{\sigma_{x}(i)}^{-1} \cdots T_{\sigma_{x}(i)+d_{i}^{+}(x)-1}^{-1})f(w_{y}y). 
\end{align*}
Therefore
\begin{align*}
T_{i}^{(-)}(x).f(y)=
T_{w_{x}}^{-1}(T_{\sigma_{x}(i)-1}^{-1} \cdots T_{\sigma_{x}(i)-d_{i}^{-}(x)}^{-1})
(T_{\sigma_{x}(i)-d_{i}^{-}(x)}^{-1} \cdots T_{\sigma_{x}(i)+d_{i}^{+}(x)-1}^{-1}).f(w_{y}y).  
\end{align*}
Since $w_{y}y=w_{x}x-v_{\sigma_{x}(i)+d_{i}^{+}(x)}$ and 
$a_{j}(w_{x}x)=0$ for $\sigma_{x}(i)-d_{i}^{-}(x) \le j<\sigma_{x}(i)+d_{i}^{+}(x)$, 
the right hand side above is equal to 
\begin{align*}
T_{w_{x}}^{-1}(T_{\sigma_{x}(i)-1}^{-1} \cdots T_{\sigma_{x}(i)-d_{i}^{-}(x)}^{-1}).
f(w_{x}x-v_{\sigma_{x}(i)-d_{i}^{-}(x)}).  
\end{align*}
Note that if $c_{1}+\cdots +c_{a-1}<\sigma_{x}(i)\le c_{1}+\cdots +c_{a}$, 
then $\sigma_{x}(i)-d_{i}^{-}(x)=c_{1}+\cdots +c_{a-1}+1$, which is independent of $i$. 
Thus the left hand side of \eqref{eq:Hamiltonian-on-symmetric-space} is equal to 
\begin{align*}
T_{w_{x}}^{-1}\sum_{a=1}^{r}\sum_{l=1}^{c_{a}}q^{l-1}
(T_{c_{1}+\cdots +c_{a-1}+l-1}^{-1} \cdots T_{c_{1}+\cdots +c_{a-1}+1}^{-1}).
f(w_{x}x-v_{c_{1}+\cdots +c_{a-1}+1}). 
\end{align*} 
Now the equality \eqref{eq:Hamiltonian-on-symmetric-space} is an immediate consequence 
of Lemma \ref{lem:Hamiltonian-on-symmetric-space}. 
\end{proof}

Now let us prove Theorem \ref{thm:invariance}. 

\begin{proof}[Proof of Theorem \ref{thm:invariance}]
Suppose that $f \in F_{0}(L, M)$, $x \in L_{+}$ and $1\le i<k$.  
Note that $w_{s_{i}x}s_{i}x=w_{x}x$ and hence  
the cluster coordinates of $x$ and $s_{i}x$ are the same. 

If $a_{i}(x)>0$, then $T_{w_{s_{i}x}}=T_{w_{x}}T_{i}$ from Proposition \ref{prop:shortest-prod}.  
Using \eqref{eq:Hamiltonian-on-symmetric-space-formula}, 
we see that $(Hf)(s_{i}x)=T_{i}^{-1}.(Hf)(x)$.  
Changing $x$ to $s_{i}x$, we find that $(Hf)(s_{i}x)=T_{i}.(Hf)(x)$ if $a_{i}(x)<0$. 

Let us consider the case where $a_{i}(x)=0$. 
{}From Proposition \ref{prop:shortest-prod}, it holds that 
$T_{i}^{-1}T_{w_{x}}^{-1}=T_{w_{x}}^{-1}T_{\sigma_{x}(i)}^{-1}$. 
Note that $d_{\sigma_{x}(i)}^{+}(w_{x}x)=d_{i}^{+}(x)>0$ because $\epsilon_{i+1}(x)=\epsilon_{i}(x)$. 
Hence there exists $1\le a \le r$ such that 
$c_{1}+\cdots +c_{a-1}<\sigma_{x}(i)<c_{1}+\cdots +c_{a}$. 
Now Lemma \ref{lem:invariance} below implies that $T_{i}^{-1}.(Hf)(x)=(Hf)(x)$. 
\end{proof}

\begin{lem}\label{lem:invariance}
Suppose that $x \in L_{+}, f \in F_{0}(L, M)$ and $0\le p<p+c \le k$. 
Set 
\begin{align*}
J=\sum_{l=1}^{c}q^{c-l}(T_{p+l}^{-1}T_{p+l+1}^{-1} \cdots T_{p+c-1}^{-1}).  
\end{align*}
If $a_{j}(x)=0$ for $p+1\le j<p+c$,  
then 
\begin{align*}
T_{p+i}^{-1}\,J.f(x-v_{p+c})=J.f(x-v_{p+c})
\end{align*}
for $1 \le i\le c-1$.  
\end{lem}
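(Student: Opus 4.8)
The plan is to first collapse $J.f(x-v_{p+c})$ into a symmetric sum of values of $f$, and then to verify invariance under $T_{p+i}^{-1}$ by a short two-term computation. For the reduction I would argue exactly as in the proof of Lemma~\ref{lem:Hamiltonian-on-symmetric-space}: since $a_{j}(x)=0$ for $p+1\le j<p+c$, one has $a_{j}(x-v_{j+1})=1\ge 0$, and because $s_{j}x=x$ and $s_{j}v_{j+1}=v_{j}$ we get $s_{j}(x-v_{j+1})=x-v_{j}$; the defining relation of $F_{0}(L,M)$ (Definition~\ref{defn:invariant-subspace}) then gives $f(x-v_{j})=T_{j}^{-1}.f(x-v_{j+1})$ for these $j$. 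Iterating downward from $j=p+c-1$ yields
\[
(T_{p+l}^{-1}\cdots T_{p+c-1}^{-1}).f(x-v_{p+c})=f(x-v_{p+l})\qquad(1\le l\le c),
\]
so that $J.f(x-v_{p+c})=\sum_{l=1}^{c}q^{c-l}f(x-v_{p+l})$. It therefore suffices to show that $T_{p+i}^{-1}$ fixes this sum.

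Next I would compute the action of $T_{p+i}^{-1}$ term by term. Using $a_{p+i}(x)=0$ together with $a_{p+i}(v_{p+l})=\delta_{i,l}-\delta_{i,l-1}$, the integer $a_{p+i}(x-v_{p+l})$ equals $-1$ for $l=i$, equals $+1$ for $l=i+1$, and equals $0$ for $l\notin\{i,i+1\}$. Whenever $a_{p+i}(y)=0$ we have $s_{p+i}y=y$, so $F_{0}(L,M)$ forces $f(y)=T_{p+i}^{-1}.f(y)$; hence $T_{p+i}^{-1}$ fixes every term with $l\notin\{i,i+1\}$. For the two surviving terms, write $A:=f(x-v_{p+i})$ and $B:=f(x-v_{p+i+1})$. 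The case $a_{p+i}(x-v_{p+i+1})=+1\ge 0$ of the $F_{0}$-relation, with $s_{p+i}(x-v_{p+i+1})=x-v_{p+i}$, gives $A=T_{p+i}^{-1}.B$, equivalently $B=T_{p+i}.A$.

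Finally I would verify the surviving identity. The contribution of $l=i,i+1$ is $q^{c-i}A+q^{c-i-1}B$ before applying $T_{p+i}^{-1}$ and $q^{c-i}T_{p+i}^{-1}.A+q^{c-i-1}A$ afterwards (using $T_{p+i}^{-1}.B=A$). Substituting $q\,T_{p+i}^{-1}.A=T_{p+i}.A-(1-q)A=B-(1-q)A$, which comes from the quadratic relation $qT_{p+i}^{-1}=T_{p+i}-(1-q)$, turns the latter into $q^{c-i-1}(B+qA)$, coinciding with the former $q^{c-i-1}(qA+B)$. This yields $T_{p+i}^{-1}(J.f(x-v_{p+c}))=J.f(x-v_{p+c})$. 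I expect the only delicate point to be the sign and index bookkeeping, namely pinning down that $a_{p+i}(x-v_{p+l})$ is $-1$, $+1$, or $0$ according as $l=i$, $l=i+1$, or $l\notin\{i,i+1\}$, and matching the powers $q^{c-i}$ and $q^{c-i-1}$ to this pair; once that is in place the identity collapses at once through the Hecke relation, so no substantial obstacle should remain.
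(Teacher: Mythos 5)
Your proof is correct, and it takes a genuinely different route from the paper's. The paper argues at the level of the Hecke algebra: it commutes $T_{p+i}^{-1}$ through $J$ using the braid relations together with the quadratic relation, rewriting $T_{p+i}^{-1}J$ as three sums in which the surplus factor ($T_{p+i-1}^{-1}$ for $l\le i-1$, nothing for $l=i,i+1$, and $T_{p+i}^{-1}$ for $l\ge i+2$) sits on the \emph{right} of each product, and then annihilates those surplus factors on the vector $f(x-v_{p+c})$ via $T_{j}^{-1}.f(x-v_{p+c})=f(x-v_{p+c})$ for $p+1\le j\le p+c-2$ (valid since $a_{j}(x-v_{p+c})=0$ for those $j$). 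You instead collapse $J.f(x-v_{p+c})$ first, reusing the identity $(T_{p+l}^{-1}\cdots T_{p+c-1}^{-1}).f(x-v_{p+c})=f(x-v_{p+l})$ already established in the proof of Lemma~\ref{lem:Hamiltonian-on-symmetric-space}, to obtain $\sum_{l=1}^{c}q^{c-l}f(x-v_{p+l})$, and then check invariance termwise: your index bookkeeping $a_{p+i}(x-v_{p+l})=-\delta_{i,l}+\delta_{i+1,l}$ is right, the terms $l\notin\{i,i+1\}$ are fixed by the $F_{0}$-relation at $a_{p+i}=0$, and the surviving pair closes under $qT_{p+i}^{-1}=T_{p+i}-(1-q)$ combined with $f(x-v_{p+i})=T_{p+i}^{-1}.f(x-v_{p+i+1})$; the computation $q^{c-i}T_{p+i}^{-1}.A+q^{c-i-1}A=q^{c-i-1}(qA+B)$ checks out. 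What your version buys is the complete avoidance of braid-relation manipulation and a transparent mechanism (a local two-term $q$-symmetry among function values); what the paper's version buys is self-containedness within $\mathcal{H}_{k}$ up to a single evaluation fact, without invoking the collapsed-sum identity from the neighbouring lemma. One incidental observation: you quote the quadratic relation in the correct form $qT_{j}^{-1}=T_{j}-(1-q)$, whereas the paper's proof writes it as $q^{-1}T_{j}^{-2}=1-(1-q)T_{j}^{-1}$, which is a typo for $qT_{j}^{-2}=1-(1-q)T_{j}^{-1}$; this affects neither argument.
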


\begin{proof}
Using the quadratic relation $q^{-1}T_{j}^{-2}=1-(1-q)T_{j}^{-1}$, we have
\begin{align*}
T_{p+i}^{-1}\,J&=
\sum_{l=1}^{i-1}q^{c-l}(T_{p+l}^{-1} \cdots T_{p+c-1}^{-1})T_{p+i-1}^{-1}+
\sum_{l=i, i+1}q^{c-l}(T_{p+l}^{-1} \cdots T_{p+c-1}^{-1}) \\ 
&+\sum_{l=i+2}^{c}q^{c-l}(T_{p+l}^{-1} \cdots T_{p+c-1}^{-1})T_{p+i}^{-1}. 
\end{align*} 
Note that the first and the third term in the right hand side vanish if $i=1$ and $i=c-1$, respectively. 
Since $T_{j}^{-1}.f(x-v_{p+c})=f(x-v_{p+c})$ for $p+1\le j \le p+c-2$, 
we obtain the desired formula. 
\end{proof}

\subsection{Bethe wave functions}

Let $M$ be a left $\mathcal{H}_{k}$-module. 
We construct eigenfunctions of the restriction $H|_{F_{0}(L, M)}$, 
which we call the {\it Bethe wave functions}. 

Denote by $F(L, M)^{\mathcal{H}_{k}}$ the subspace of $F(L, M)$ 
consisting of the $\rho(\mathcal{H}_{k})$-invariant functions, that is, 
\begin{align*}
F(L, M)^{\mathcal{H}_{k}}:=\left\{
f \in F(L, M) \, | \, \rho(T_{i})f=f \,\, \hbox{for} \,\, 1\le i<k \right\}. 
\end{align*}

\begin{prop}
Let $M$ be a left $\mathcal{H}_{k}$-module. 
It holds that $G(F(L, M)^{\mathcal{H}_{k}}) \subset F_{0}(L, M)$.  
Therefore if $h \in F(L, M)^{\mathcal{H}_{k}}$ is an eigenfunction of $\sum_{i=1}^{k}t_{i}$, 
then $G(h)$ is that of the operator $H|_{F_{0}(L, M)}$ with the same eigenvalue. 
\end{prop}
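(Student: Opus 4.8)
The plan is to write $g := G(h)$ for a fixed $h \in F(L, M)^{\mathcal{H}_{k}}$ and check directly that $g$ obeys the defining relations of $F_{0}(L, M)$ in Definition~\ref{defn:invariant-subspace}. Observe first that when $a_{i}(x) = 0$ one has $s_{i}x = x$, so the three cases of that definition collapse into two substantial computations: the case $a_{i}(x) > 0$, where the goal is $g(s_{i}x) = T_{i}^{-1}.g(x)$, and the wall case $a_{i}(x) = 0$, where $s_{i}x = x$ and the goal is the pointwise invariance $T_{i}.g(x) = g(x)$. The case $a_{i}(x) < 0$ then needs no separate argument: applying the first case to $s_{i}x$ (which satisfies $a_{i}(s_{i}x) = -a_{i}(x) > 0$) and using $s_{i}(s_{i}x) = x$ gives $g(s_{i}x) = T_{i}.g(x)$ at once.

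For the case $a_{i}(x) > 0$ I would invoke Proposition~\ref{prop:shortest-prod}(i): it gives $w_{s_{i}x} = w_{x}s_{i}$ with $\ell(w_{s_{i}x}) = \ell(w_{x}) + 1$, hence $T_{w_{s_{i}x}} = T_{w_{x}}T_{i}$ and $w_{s_{i}x}(s_{i}x) = w_{x}x$. Substituting these into the definition of $G$ produces
\begin{align*}
g(s_{i}x) = T_{i}^{-1} T_{w_{x}}^{-1}.\big( (\rho(T_{w_{x}}) \rho(T_{i}) h)(w_{x}x) \big),
\end{align*}
and since $h \in F(L, M)^{\mathcal{H}_{k}}$ satisfies $\rho(T_{i}) h = h$, the right-hand side collapses to $T_{i}^{-1}.g(x)$.

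The heart of the argument is the wall case $a_{i}(x) = 0$, and this is where I expect the main difficulty. Here Proposition~\ref{prop:shortest-prod}(ii) supplies $w_{x}s_{i} = s_{\sigma_{x}(i)}w_{x}$ together with $a_{\sigma_{x}(i)}(w_{x}x) = 0$, and the accompanying length additivity yields the key identity $T_{\sigma_{x}(i)}T_{w_{x}} = T_{w_{x}}T_{i}$ in $\mathcal{H}_{k}$. Writing $\phi := \rho(T_{w_{x}}) h$, this identity combined with $\rho(T_{i}) h = h$ gives $\rho(T_{\sigma_{x}(i)}) \phi = \rho(T_{w_{x}}T_{i}) h = \phi$. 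The delicate point is converting this operator-level invariance into the pointwise statement I need: because $a_{\sigma_{x}(i)}(w_{x}x) = 0$, Proposition~\ref{prop:action-trivial} lets me evaluate $(\rho(T_{\sigma_{x}(i)}) \phi)(w_{x}x) = T_{\sigma_{x}(i)}.\phi(w_{x}x)$, whence $T_{\sigma_{x}(i)}.\phi(w_{x}x) = \phi(w_{x}x)$. Finally, rewriting $T_{i}T_{w_{x}}^{-1} = T_{w_{x}}^{-1}T_{\sigma_{x}(i)}$ from the same identity, I obtain $T_{i}.g(x) = T_{w_{x}}^{-1}T_{\sigma_{x}(i)}.\phi(w_{x}x) = T_{w_{x}}^{-1}.\phi(w_{x}x) = g(x)$, as desired.

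Once $G(h) \in F_{0}(L, M)$ is established, the concluding assertion is immediate. By the theorem $HG = G \sum_{i=1}^{k} t_{i}$, an eigenfunction $h$ of $\sum_{i=1}^{k} t_{i}$ with eigenvalue $\lambda$ produces $H G(h) = G\big(\sum_{i=1}^{k} t_{i}\, h\big) = \lambda G(h)$; since $G(h)$ lies in the invariant subspace $F_{0}(L, M)$, it is an eigenfunction of the restriction $H|_{F_{0}(L, M)}$ with the same eigenvalue.
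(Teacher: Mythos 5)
Your proof is correct and follows essentially the same route as the paper's: Proposition \ref{prop:shortest-prod}(i) for $a_{i}(x)>0$ (with the case $a_{i}(x)<0$ deduced by replacing $x$ with $s_{i}x$), and in the wall case $a_{i}(x)=0$ the identity $T_{\sigma_{x}(i)}T_{w_{x}}=T_{w_{x}}T_{i}$ together with Proposition \ref{prop:action-trivial} at the point $w_{x}x$, followed by the same appeal to $HG=G\sum_{i=1}^{k}t_{i}$. The only cosmetic differences are that you carry $\phi=\rho(T_{w_{x}})h$ along instead of noting at the outset that $(Gh)(x)=T_{w_{x}}^{-1}.h(w_{x}x)$, and that the fact $a_{\sigma_{x}(i)}(w_{x}x)=0$, which you attribute directly to Proposition \ref{prop:shortest-prod}(ii), strictly requires the one-line verification $a_{\sigma_{x}(i)}(w_{x}x)=\epsilon_{i}(x)-\epsilon_{i+1}(x)=a_{i}(x)=0$ using $\sigma_{x}(i+1)=\sigma_{x}(i)+1$ and \eqref{eq:d-change}, which the paper spells out.
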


\begin{proof}
Let $f$ be a function which belongs to $F(L, M)^{\mathcal{H}_{k}}$. 
{}From the definition of the propagation operator, 
we see that $(Gf)(x)=T_{w_{x}}^{-1}.f(w_{x}x)$. 

Suppose that $x \in L$ and $1 \le i<k$. 
If $a_{i}(x)>0$, we have $T_{w_{s_{i}x}}=T_{w_{x}}T_{i}$ by Proposition \ref{prop:shortest-prod}. 
Hence $(Gf)(s_{i}x)=T_{i}^{-1}.(Gf)(x)$ because 
$w_{s_{i}x}s_{i}x=w_{x}x$. 
This also implies that $(Gf)(s_{i}x)=T_{i}.(Gf)(x)$ if $a_{i}(x)<0$. 

Let us consider the case where $a_{i}(x)=0$. 
Then we have $\sigma_{x}(i+1)=\sigma_{x}(i)+1$ and 
$T_{i}^{-1}T_{w_{x}}^{-1}=T_{w_{x}}^{-1}T_{\sigma_{x}(i)}^{-1}$ by 
Proposition \ref{prop:shortest-prod}. 
Now note that 
\begin{align*}
a_{\sigma_{x}(i)}(w_{x}x)=\epsilon_{\sigma_{x}(i)}(w_{x}x)-\epsilon_{\sigma_{x}(i+1)}(w_{x}x)=
\epsilon_{i}(x)-\epsilon_{i+1}(x)=a_{i}(x)=0. 
\end{align*}
Hence we find that $T_{\sigma_{x}(i)}^{-1}.f(w_{x}x)=(\rho(T_{\sigma_{x}(i)}^{-1})f)(w_{x}x)=f(w_{x}x)$ from 
Proposition \ref{prop:action-trivial}. 
Therefore $T_{i}^{-1}.(Gf)(x)=(Gf)(x)$.  
\end{proof}

For $1\le i<k$ and $\lambda \in V^{*}$, we define $Y_{i}(\lambda) \in \mathcal{H}_{k}$ by 
\begin{align*}
Y_{i}(\lambda):=
\frac{(e^{\lambda(v_{i})}-e^{\lambda(v_{i+1})})T_{i}-e^{\lambda(v_{i})}(\alpha\,e^{\lambda(v_{i+1})}+1-q)}
{\alpha\,e^{\lambda(v_{i}+v_{i+1})}+e^{\lambda(v_{i})}-qe^{\lambda(v_{i+1})}}.  
\end{align*}

\begin{lem}\label{lem:comm-rel-Y}
The following equalities hold. 
\begin{enumerate}
 \item $Y_{i}(s_{i}\lambda)Y_{i}(\lambda)=1$ for $1 \le i<k$ and $\lambda \in V^{*}$. 
 \item 
$Y_{i+1}(s_{i}s_{i+1}\lambda)Y_{i}(s_{i+1}\lambda)Y_{i+1}(\lambda)=
Y_{i}(s_{i+1}s_{i}\lambda)Y_{i+1}(s_{i}\lambda)Y_{i}(\lambda)$ for $1\le i \le k-2$ and $\lambda \in V^{*}$. 
\end{enumerate}
\end{lem}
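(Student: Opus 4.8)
The plan is to treat both identities as equalities in the rank-two Hecke subalgebra generated by $T_{i}$ and $T_{i+1}$, with coefficients that are rational functions of the scalars $z_{j}:=e^{\lambda(v_{j})}$, and to reduce everything to the quadratic relation $T_{j}^{2}=(1-q)T_{j}+q$ and the braid relation $T_{j}T_{j+1}T_{j}=T_{j+1}T_{j}T_{j+1}$. The first observation is that a simple reflection permutes the $z_{j}$ by a transposition: since $(s_{i}\lambda)(v)=\lambda(s_{i}v)$ and $s_{i}$ exchanges $v_{i}$ and $v_{i+1}$, replacing $\lambda$ by $s_{i}\lambda$ interchanges $z_{i}\leftrightarrow z_{i+1}$ and fixes the remaining $z_{j}$. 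I write $Y_{i}(\lambda)=N_{i}(\lambda)/D_{i}(\lambda)$, where $N_{i}(\lambda)=(z_{i}-z_{i+1})T_{i}-z_{i}(\alpha z_{i+1}+1-q)$ is affine in $T_{i}$ and $D_{i}(\lambda)=\alpha z_{i}z_{i+1}+z_{i}-qz_{i+1}$ is a scalar.

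For (i), the scalar denominators reduce the claim to $N_{i}(s_{i}\lambda)N_{i}(\lambda)=D_{i}(s_{i}\lambda)D_{i}(\lambda)$. The left-hand side is a product of two expressions affine in $T_{i}$, so it is evaluated by a single use of $T_{i}^{2}=(1-q)T_{i}+q$. First I would check that the coefficient of $T_{i}$ cancels; collecting the cross terms, this amounts to $(z_{i}-z_{i+1})(1-q)+(z_{i+1}-z_{i})(1-q)=0$. It then remains to compare the scalar parts: the product of the two constant terms exceeds $D_{i}(s_{i}\lambda)D_{i}(\lambda)$ by exactly $q(z_{i}-z_{i+1})^{2}$, and this excess is removed by the term $-q(z_{i}-z_{i+1})^{2}$ produced by the scalar $q$ in $T_{i}^{2}$. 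Hence the two sides agree and (i) holds.

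For (ii) the decisive preliminary step is that the denominators cancel. Computing the four weights $s_{i}\lambda$, $s_{i+1}\lambda$, $s_{i}s_{i+1}\lambda$, $s_{i+1}s_{i}\lambda$ explicitly and substituting, one finds that on \emph{both} sides the three scalar denominators are exactly $\alpha z_{m}z_{n}+z_{m}-qz_{n}$ for $(m,n)\in\{(i,i+1),(i,i+2),(i+1,i+2)\}$; being scalars, they may be cleared. Setting $R_{j}(x,y):=(x-y)T_{j}-x(\alpha y+1-q)$, identity (ii) is thereby reduced to the Baxterized braid relation
\[
R_{i+1}(z_{i},z_{i+1})\,R_{i}(z_{i},z_{i+2})\,R_{i+1}(z_{i+1},z_{i+2})
=R_{i}(z_{i+1},z_{i+2})\,R_{i+1}(z_{i},z_{i+2})\,R_{i}(z_{i},z_{i+1}).
\]
I would prove this by expanding both triple products in the six-element basis $\{1,T_{i},T_{i+1},T_{i}T_{i+1},T_{i+1}T_{i},T_{i}T_{i+1}T_{i}\}$, normalizing every word with the braid and quadratic relations, and matching the six coefficients, each a polynomial in $z_{i},z_{i+1},z_{i+2},\alpha,q$.

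The main obstacle is precisely this coefficient matching: although finite, it is a genuinely three-variable computation, and the inhomogeneous shape of $R_{j}$ --- a degree-one part $(x-y)T_{j}$ together with the degree-two scalar $\alpha xy$ --- forces one to track several monomials. A useful organizing principle, and an independent sanity check, is that $R_{j}(x,y)$ equals $(x-y)$ times the affine Hecke intertwiner $\tau_{j}=T_{j}+(\alpha+(1-q)X_{j+1})(X_{j}-X_{j+1})^{-1}$ of a localization of $\mathcal{A}_{k}$, evaluated on the weight where $X_{m}$ has eigenvalue $z_{m}^{-1}$. The intertwining property $\tau_{j}X_{j}=X_{j+1}\tau_{j}$ is immediate from the defining relation $X_{j+1}T_{j}-T_{j}X_{j}=(1-q)X_{j+1}+\alpha$, and the braid relation for the $\tau_{j}$ forces exactly the displayed identity; this predicts the result and can be used either to guide or, once the necessary localization is set up, to replace the direct expansion.
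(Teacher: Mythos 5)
Your proposal is correct and takes essentially the same route as the paper, whose entire proof of this lemma is ``by a direct computation'': you merely organize that computation, and the organization is sound --- the three denominators on each side of (ii) do coincide as the scalars $\alpha z_{m}z_{n}+z_{m}-qz_{n}$ for $(m,n)\in\{(i,i+1),(i,i+2),(i+1,i+2)\}$, reducing (ii) to your Baxterized braid relation for the numerators. That relation does close under the expansion you outline: the coefficients of $T_{i}T_{i+1}T_{i}$, $T_{i}T_{i+1}$, $T_{i+1}T_{i}$ and of $1$ match termwise, and the remaining $T_{i}$ and $T_{i+1}$ coefficients both reduce to the one-line polynomial identity $(a-c)A_{ab}A_{bc}=(b-c)A_{ab}A_{ac}+(a-b)A_{ac}A_{bc}-(1-q)(a-b)(b-c)A_{ac}$ with $A_{xy}:=x(\alpha y+1-q)$, which holds identically.
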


\begin{proof}
By a direct computation.  
\end{proof}

{}From Lemma \ref{lem:comm-rel-Y}, we obtain the following proposition. 

\begin{prop}
Suppose that $\lambda \in V^{*}$. 
There exists a unique $\mathbb{C}$-algebra homomorphism $\phi_{\lambda}: \mathbb{C}[W] \to \mathcal{H}_{k}$ 
such that $\phi_{\lambda}(1)=1$ and 
\begin{align}
\phi_{\lambda}(s_{i}w)=Y_{i}(w\lambda)\phi_{\lambda}(w) 
\label{eq:phi-cocycle}
\end{align}
for $1 \le i<k$ and $w \in W$.  
\end{prop}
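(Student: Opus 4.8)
The plan is to define $\phi_{\lambda}$ first on the basis $W$ of $\mathbb{C}[W]$ and then extend it $\mathbb{C}$-linearly. For $w \in W$ I would fix a reduced expression $w = s_{i_{1}}\cdots s_{i_{\ell}}$ and set
\[
\phi_{\lambda}(w) := Y_{i_{1}}(s_{i_{2}}\cdots s_{i_{\ell}}\lambda)\,Y_{i_{2}}(s_{i_{3}}\cdots s_{i_{\ell}}\lambda)\cdots Y_{i_{\ell-1}}(s_{i_{\ell}}\lambda)\,Y_{i_{\ell}}(\lambda),
\]
with $\phi_{\lambda}(1):=1$ for the empty word. The purpose of the shifted spectral parameters $s_{i_{p+1}}\cdots s_{i_{\ell}}\lambda$ is exactly that deleting the leftmost letter $s_{i_{1}}$ strips off the factor $Y_{i_{1}}(w'\lambda)$ with $w'=s_{i_{2}}\cdots s_{i_{\ell}}$; thus \eqref{eq:phi-cocycle} will hold by construction whenever $\ell(s_{i}w)=\ell(w)+1$. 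Granting well-definedness, the recursion \eqref{eq:phi-cocycle} in full and uniqueness are then short.

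First I would show that $\phi_{\lambda}(w)$ is independent of the chosen reduced expression, which is the heart of the matter. By the Tits word theorem (Matsumoto's theorem), any two reduced expressions of $w$ differ by a finite sequence of braid moves $s_{i}s_{i+1}s_{i}\leftrightarrow s_{i+1}s_{i}s_{i+1}$ and commutation moves $s_{i}s_{j}\leftrightarrow s_{j}s_{i}$ with $|i-j|\ge 2$; since reduced words never contain $s_{i}^{2}$, no other moves occur. It therefore suffices to check invariance of the product under each such move applied to a consecutive block, the trailing subword contributing a common shift $\mu$ of the spectral parameter. For a braid move the required identity is precisely Lemma \ref{lem:comm-rel-Y}(ii) evaluated at $\mu$. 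For a commutation move with $|i-j|\ge 2$ one has $\{i,i+1\}\cap\{j,j+1\}=\emptyset$, so $s_{j}$ fixes the coordinates $v_{i},v_{i+1}$ on which $Y_{i}$ depends (and symmetrically); hence $Y_{i}(s_{j}\mu)=Y_{i}(\mu)$ and $Y_{j}(s_{i}\mu)=Y_{j}(\mu)$, while $T_{i}$ and $T_{j}$ commute in $\mathcal{H}_{k}$, giving $Y_{i}(s_{j}\mu)Y_{j}(\mu)=Y_{j}(s_{i}\mu)Y_{i}(\mu)$. This yields well-definedness. I expect this independence-of-reduced-word step to be the main obstacle, the commutation case being routine and the braid case being exactly Lemma \ref{lem:comm-rel-Y}(ii).

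With $\phi_{\lambda}$ well defined I would verify \eqref{eq:phi-cocycle} for all $w$ and $i$. When $\ell(s_{i}w)=\ell(w)+1$ it holds by construction. When $\ell(s_{i}w)=\ell(w)-1$, write $w=s_{i}w'$ with $w'=s_{i}w$ and $\ell(w')=\ell(w)-1$; the ascending case applied to $w'$ gives $\phi_{\lambda}(w)=Y_{i}(w'\lambda)\phi_{\lambda}(w')$, and since $w\lambda=s_{i}(w'\lambda)$, Lemma \ref{lem:comm-rel-Y}(i) gives $Y_{i}(w\lambda)Y_{i}(w'\lambda)=1$, whence $Y_{i}(w\lambda)\phi_{\lambda}(w)=\phi_{\lambda}(w')=\phi_{\lambda}(s_{i}w)$, as required. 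Finally, uniqueness follows by induction on $\ell(w)$: the value $\phi_{\lambda}(1)=1$ is prescribed, and for $w\neq 1$ any descent $s_{i}$ (with $\ell(s_{i}w)<\ell(w)$) forces $\phi_{\lambda}(w)=Y_{i}((s_{i}w)\lambda)\phi_{\lambda}(s_{i}w)$ via \eqref{eq:phi-cocycle}, so all values are determined. Hence any map satisfying the stated properties is unique, and extending $\mathbb{C}$-linearly produces the required map $\phi_{\lambda}$ on $\mathbb{C}[W]$.
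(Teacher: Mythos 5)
Your proof is correct and is essentially the argument the paper intends: the paper offers no written proof beyond ``From Lemma \ref{lem:comm-rel-Y}, we obtain the following proposition,'' and your route --- defining $\phi_{\lambda}(w)$ by a reduced word with shifted spectral parameters, proving independence of the reduced expression via Matsumoto's theorem with Lemma \ref{lem:comm-rel-Y}(ii) handling the braid moves, deducing the descent case of \eqref{eq:phi-cocycle} from Lemma \ref{lem:comm-rel-Y}(i), and getting uniqueness by induction on length --- is exactly the standard filling-in of that citation. Your explicit treatment of the commutation moves (noting that $Y_{i}(\mu)$ is a scalar combination of $T_{i}$ and $1$, so $Y_{i}$ and $Y_{j}$ commute for $|i-j|\ge 2$, while $s_{j}$ fixes $\mu(v_{i})$ and $\mu(v_{i+1})$) supplies a step the paper's lemma does not state but which is genuinely needed, and you verify it correctly.
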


\begin{thm}\label{thm:wave-function}
Let $M$ be a left $\mathcal{H}_{k}$-module. 
For $\lambda \in V^{*}$ and $m \in M$, we define a function $h_{\lambda}^{m} \in F(L, M)$ by 
\begin{align}
h_{\lambda}^{m}(x):=\sum_{w \in W}e^{(w\lambda)(x)}(\phi_{\lambda}(w).m) \quad (x \in L),  
\label{eq:wave-function} 
\end{align}
where $.$ means the left action of $\mathcal{H}_{k}$ on $M$. 
Then the function $h_{\lambda}^{m}$ belongs to $F(L, M)^{\mathcal{H}_{k}}$ and 
is an eigenfunction of $\sum_{i=1}^{k}t_{i}$ with eigenvalue $\sum_{i=1}^{k}e^{-\lambda(v_{i})}$. 
Therefore $G(h_{\lambda}^{m})$ is an eigenfunction of $H|_{F_{0}(L, M)}$. 
\end{thm}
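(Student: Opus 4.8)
The plan is to reduce the statement to its two substantive assertions and then invoke the proposition proved just above. By that proposition, $G$ carries any eigenfunction of $\sum_{i=1}^k t_i$ lying in $F(L,M)^{\mathcal{H}_k}$ to an eigenfunction of $H|_{F_0(L,M)}$ with the same eigenvalue. Hence the final claim about $G(h_\lambda^m)$ is automatic once I have shown (a) that $h_\lambda^m\in F(L,M)^{\mathcal{H}_k}$, i.e. $\rho(T_i)h_\lambda^m=h_\lambda^m$ for $1\le i<k$, and (b) that $h_\lambda^m$ is an eigenfunction of $\sum_{i=1}^k t_i$ with eigenvalue $\sum_{i=1}^k e^{-\lambda(v_i)}$.

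The eigenvalue assertion (b) is the easy part. Applying the shift $t_j$ to \eqref{eq:wave-function} multiplies the $w$-th term by $e^{-(w\lambda)(v_j)}$, so
\[
\Big({\textstyle\sum_{j=1}^k}t_j\Big)h_\lambda^m(x)=\sum_{w\in W}\Big({\textstyle\sum_{j=1}^k}e^{-(w\lambda)(v_j)}\Big)e^{(w\lambda)(x)}(\phi_\lambda(w).m).
\]
Since $(w\lambda)(v_j)=\lambda(w^{-1}v_j)$ and $w^{-1}\in W=\mathfrak{S}_k$ merely permutes the orthonormal basis $\{v_i\}$, the inner sum equals $\sum_{i=1}^k e^{-\lambda(v_i)}$ for every $w$, factors out, and gives (b).

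For the invariance (a) I would first compute how $\rho(T_i)=\widehat{T}_i s_i+\widehat{I}_i$ acts on a single plane wave, i.e. on the function $e^\mu\!\cdot v:x\mapsto e^{\mu(x)}v$ for $\mu\in V^*$, $v\in M$. The term $\widehat{T}_i s_i$ is immediate: $s_i(e^\mu\!\cdot v)=e^{s_i\mu}\!\cdot v$, after which $\widehat{T}_i$ applies $T_i$ to the coefficient. The heart of the argument is evaluating $\widehat{I}_i(e^\mu\!\cdot v)$ from the explicit formula \eqref{eq:integration-op}: the $l$-sum there becomes a geometric series in $e^{-(\mu(v_i)-\mu(v_{i+1}))}$, and using the identity $\mu(x)-a_i(x)\big(\mu(v_i)-\mu(v_{i+1})\big)=(s_i\mu)(x)$ one recognizes the partial sum as a difference $e^{\mu(x)}-e^{(s_i\mu)(x)}$, valid uniformly in the sign of $a_i(x)$. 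This yields the closed form
\[
\widehat{I}_i(e^\mu\!\cdot v)=(\alpha e^{\mu(v_{i+1})}+1-q)\,\frac{e^{\mu(v_i)}}{e^{\mu(v_i)}-e^{\mu(v_{i+1})}}\,(e^\mu-e^{s_i\mu})\!\cdot v .
\]
Carrying out this evaluation of the discrete integral–reflection operator on a non-polynomial exponential is the step I expect to be the main obstacle; everything afterwards is bookkeeping. Writing $c$ for the scalar above, I obtain $\rho(T_i)(e^\mu\!\cdot v)=e^\mu\!\cdot(c\,v)+e^{s_i\mu}\!\cdot(T_i.v-c\,v)$.

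Finally I would partition $W$ into the two-element left cosets $\{w,s_iw\}$ of $\langle s_i\rangle$ and treat each pair separately. Setting $\mu=w\lambda$, $v_1=\phi_\lambda(w).m$ and $v_2=\phi_\lambda(s_iw).m$, the cocycle relation \eqref{eq:phi-cocycle} gives $v_2=Y_i(\mu)v_1$. Applying the formula for $\rho(T_i)$ to the pair's contribution $e^\mu\!\cdot v_1+e^{s_i\mu}\!\cdot v_2$ and collecting the coefficients of $e^\mu$ and $e^{s_i\mu}$, invariance of the pair reduces to the two identities
\[
T_i.v_1-c\,v_1+c'\,v_2=v_2,\qquad c\,v_1+T_i.v_2-c'\,v_2=v_1,
\]
where $c'$ is the scalar $c$ with $\mu$ replaced by $s_i\mu$. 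The first identity follows directly from the definition of $Y_i(\mu)$: its denominator equals $(e^{\mu(v_i)}-e^{\mu(v_{i+1})})(1-c')$, so that $(1-c')v_2=T_i.v_1-c\,v_1$. The second identity is the first one with $\mu$ and $s_i\mu$ interchanged, which is legitimate because $Y_i(s_i\mu)Y_i(\mu)=1$ from Lemma \ref{lem:comm-rel-Y}(i) forces $v_1=Y_i(s_i\mu).v_2$. Summing the pairwise identities over all cosets gives $\rho(T_i)h_\lambda^m=h_\lambda^m$, completing (a) and hence the theorem.
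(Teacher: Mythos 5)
Your proof is correct and takes essentially the same route as the paper: both rest on the closed-form evaluation of $\widehat{I}_i$ on exponentials (your expression agrees with the paper's, since $(\alpha e^{\mu(v_{i+1})}+1-q)e^{\mu(v_i)}=\alpha e^{\mu(v_i+v_{i+1})}+(1-q)e^{\mu(v_i)}$) together with the cocycle relation \eqref{eq:phi-cocycle} rewritten to express the $T_i$-action on the coefficients $\phi_{\lambda}(w).m$. Your pairing of $W$ into cosets $\{w,s_iw\}$ and the use of Lemma \ref{lem:comm-rel-Y}(i) for the second identity simply make explicit the bookkeeping the paper compresses into ``combining the above equalities.''
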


\begin{proof}
It is clear that $h_{\lambda}^{m}$ is an eigenfunction of $\sum_{i=1}^{k}t_{i}$. 
{}From the relation \eqref{eq:phi-cocycle}, we have 
\begin{align*}
T_{j}\phi_{\lambda}(w)&=
\frac{e^{w\lambda(v_{j})}(\alpha\, e^{w\lambda(v_{j+1})}+1-q)}{e^{w\lambda(v_{j})}-e^{w\lambda(v_{j+1})}}
\phi_{\lambda}(w) \\ 
&+\frac{\alpha e^{w\lambda(v_{j}+v_{j+1})}+e^{w\lambda(v_{j})}-qe^{w\lambda(v_{j+1})}}
{e^{w\lambda(v_{j})}-e^{w\lambda(v_{j+1})}}\phi_{\lambda}(s_{j}w) 
\end{align*}
for $1 \le j<k$ and $w \in W$. 
Moreover, from the definition of $\widehat{I}_{j}$, we see that 
\begin{align*}
\widehat{I}_{j}(e^{\mu})=
\frac{\alpha \, e^{\mu(v_{j}+v_{j+1})}+(1-q)e^{\mu(v_{j})}}
{e^{\mu(v_{j})}-e^{\mu(v_{j+1})}}
\left(e^{\mu}-e^{s_{j}\mu}\right) \quad (\mu \in V^{*}). 
\end{align*}
Combining the above equalities, we find that $\rho(T_{i})h_{\lambda}^{m}=h_{\lambda}^{m}$ for $1 \le i<k$. 
\end{proof}

Let $M$ be a left $\mathcal{H}_{k}$-module. 
Set 
\begin{align*}
\mathcal{F}(L_{+}, M):=\{ f : L_{+} \to M  \, | \, 
T_{i}.f(x)=f(x) \,\, \hbox{if} \,\, a_{i}(x)=0 \}.  
\end{align*}
We identify $F_{0}(L, M)$ with $\mathcal{F}(L_{+}, M)$ by
the map $f \mapsto f|_{L+} \, (f \in F_{0}(L, M))$. 
Denote by $H^{+}$ the restriction of the discrete Hamiltonian $H$ to $\mathcal{F}(L_{+}, M)$. 
Proposition \ref{prop:Hamiltonian-on-symmetric-space} implies that  
the operator $H^{+}$ is given by 
\begin{align}
(H^{+}f)(x)&=\sum_{a=1}^{r}\sum_{l=1}^{c_{a}}
q^{c_{a}-l}\,T_{c_{1}+\cdots +c_{a-1}+l}^{-1} T_{c_{1}+\cdots +c_{a-1}+l+1}^{-1} \cdots 
T_{c_{1}+\cdots +c_{a}-1}^{-1}.f(x-v_{c_{1}+\cdots +c_{a}}) 
\label{eq:Hamiltonian-restricted}
\\
& \quad{}-\frac{\alpha}{1-q}\sum_{a=1}^{r}(c_{a}-[c_{a}]_{q})f(x) \quad 
(f \in \mathcal{F}(L_{+}, M),  x \in L_{+}), 
\nonumber 
\end{align} 
where $(c_{1}, c_{2}, \ldots , c_{r})$ is the cluster coordinate of $x$.

\begin{cor}
Let $M$ be a left $\mathcal{H}_{k}$-module. 
Suppose that $\lambda \in V^{*}$ and $m \in M$, 
and consider the function $h_{\lambda}^{m}$ defined by \eqref{eq:wave-function}.  
Then $h_{\lambda}^{m}|_{L_{+}}$ belongs to $\mathcal{F}(L_{+}, M)$ and is an eigenfunction of $H^{+}$ 
with eigenvalue $\sum_{i=1}^{k}e^{-\lambda(v_{i})}$.  
\end{cor}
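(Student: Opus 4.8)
The plan is to read off the corollary from Theorem \ref{thm:wave-function} together with the identification of $F_{0}(L, M)$ with $\mathcal{F}(L_{+}, M)$, the only genuine observation being that the propagation operator $G$ acts as the identity on the fundamental domain $L_{+}$. First I would recall from Theorem \ref{thm:wave-function} that $h_{\lambda}^{m}$ belongs to $F(L, M)^{\mathcal{H}_{k}}$ and is an eigenfunction of $\sum_{i=1}^{k}t_{i}$ with eigenvalue $\sum_{i=1}^{k}e^{-\lambda(v_{i})}$. By the preceding proposition on $G$, this forces $G(h_{\lambda}^{m}) \in F_{0}(L, M)$ and makes it an eigenfunction of $H|_{F_{0}(L, M)}$ with the same eigenvalue $\sum_{i=1}^{k}e^{-\lambda(v_{i})}$.

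The key point is that for $x \in L_{+} = L \cap \overline{C_{+}}$ the vector $x$ already lies in the closure of the fundamental chamber, so the shortest element $w_{x}$ with $w_{x}x \in \overline{C_{+}}$ is the identity. Since for $f \in F(L, M)^{\mathcal{H}_{k}}$ the proof of the preceding proposition gives $(Gf)(x) = T_{w_{x}}^{-1}.f(w_{x}x)$, we obtain $(G h_{\lambda}^{m})(x) = h_{\lambda}^{m}(x)$ for every $x \in L_{+}$. Hence the restriction $h_{\lambda}^{m}|_{L_{+}}$ coincides with $G(h_{\lambda}^{m})|_{L_{+}}$.

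Finally I would transport the eigenfunction property through the identification $f \mapsto f|_{L_{+}}$. Because $G(h_{\lambda}^{m}) \in F_{0}(L, M)$, its restriction $h_{\lambda}^{m}|_{L_{+}} = G(h_{\lambda}^{m})|_{L_{+}}$ lies in $\mathcal{F}(L_{+}, M)$. Under the same identification $H^{+}$ is precisely the operator induced by $H|_{F_{0}(L, M)}$, as recorded in the formula \eqref{eq:Hamiltonian-restricted} coming from Proposition \ref{prop:Hamiltonian-on-symmetric-space}. Therefore $h_{\lambda}^{m}|_{L_{+}}$ is an eigenfunction of $H^{+}$ with eigenvalue $\sum_{i=1}^{k}e^{-\lambda(v_{i})}$, which is the assertion. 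I do not expect a real obstacle here: the statement is an assembly of earlier results, and the single step that needs explicit checking is the identity $w_{x} = e$ for $x \in L_{+}$, which is immediate from the definition of $w_{x}$ and of $L_{+}$.
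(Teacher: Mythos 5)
Your proposal is correct and follows exactly the paper's own argument: the paper proves the corollary by citing Theorem \ref{thm:wave-function} together with the equality $G(h_{\lambda}^{m})|_{L_{+}}=h_{\lambda}^{m}|_{L_{+}}$, which is precisely your observation that $w_{x}=e$ for $x\in L_{+}$ (so $G$ acts trivially there), combined with the identification of $F_{0}(L,M)$ with $\mathcal{F}(L_{+},M)$ under which $H$ induces $H^{+}$. You have merely spelled out the steps the paper leaves implicit; there is no gap and no difference in method.
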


\begin{proof}
It follows from Theorem \ref{thm:wave-function} and the equality 
$G(h_{\lambda}^{m})|_{L_{+}}=h_{\lambda}^{m}|_{L_{+}}$.  
\end{proof}


\section{Algebraic construction of multi-species $q$-Boson system}\label{sec:system}

\subsection{Setting}

In the rest of this article we fix a positive integer $N$. 
For a positive integer $c$, set 
\begin{align*}
I_{N, c}&:=\{1, 2, \ldots , N\}^{c}, \\ 
I_{N, c}^{+}&:=\{ 
(\mu_{1}, \ldots , \mu_{c}) \in I_{N, c} \, | \, \mu_{1}\le \cdots \le \mu_{c}\}. 
\end{align*}

Let $x \in L_{+}$ and $(c_{1}, \ldots , c_{r})$ be the cluster coordinate of $x$. 
For $\boldsymbol{\mu} \in I_{N, k}$, 
we define $\boldsymbol{\mu}[x] \in I_{N, k}$ 
as follows. 
According to the decomposition 
$I_{N, k}=I_{N, c_{1}}\times \cdots \times I_{N, c_{r}}$, 
we write $\boldsymbol{\mu}=(\boldsymbol{\mu}_{1}, \ldots , \boldsymbol{\mu}_{r})$, 
where $\boldsymbol{\mu}_{a} \in I_{N, c_{a}} \, (1\le a \le r)$. 
Then let $\boldsymbol{\mu}_{a}^{+}$ be  
the unique element of $I_{N, c_{a}}^{+}$ which is a rearrangement of $\boldsymbol{\mu}_{a}$.  
Now set 
$\boldsymbol{\mu}[x]:=(\boldsymbol{\mu}_{1}^{+}, \ldots , \boldsymbol{\mu}_{r}^{+})$. 
For example, if $k=5$, $x=2v_{1}+2v_{2}-v_{3}-v_{4}-v_{5}$ and $\boldsymbol{\mu}=(3, 1, 4, 2, 5)$, 
then $(c_{1}, c_{2})=(2, 3), \boldsymbol{\mu}_{1}=(3, 1), \boldsymbol{\mu}_{2}=(4, 2, 5)$, and  
$\boldsymbol{\mu}[x]=(1, 3, 2, 4, 5)$.

Set 
\begin{align*}
\mathcal{S}:=\left\{ 
(x, \boldsymbol{\nu}) \in L_{+}\times I_{N, k} \, | \, \boldsymbol{\nu}=\boldsymbol{\nu}[x]
\right\}. 
\end{align*}
We identify $\mathcal{S}$ with the set of configurations of $k$ bosonic particles 
of $N$ species on the one-dimensional lattice $\mathbb{Z}$ as follows. 
For $x=\sum_{i=1}^{k}m_{i}v_{i} \in L_{+}$ and 
$\boldsymbol{\nu}=(\nu_{1}, \ldots , \nu_{k})$, 
we assign to $(x, \boldsymbol{\nu})$ the configuration 
such that the particles with the color $\nu_{1}, \ldots , \nu_{k}$ 
are on the sites $m_{1}, \ldots , m_{k}$, respectively.  
For example, if $k=6, N=4, x=2v_{1}+2v_{2}-v_{3}-3v_{4}-3v_{5}-3v_{6}$ 
and $\boldsymbol{\nu}=(1, 2, 4, 2, 2, 3)$, 
then $(x, \boldsymbol{\nu})$ corresponds to the configuration in Figure \ref{fig2}. 

Denote the set of $\mathbb{R}$-valued functions on $\mathcal{S}$ by $F(\mathcal{S})$. 
In the rest of this paper 
we construct the transition rate matrix $Q: F(\mathcal{S}) \to F(\mathcal{S})$ of a 
continuous-time Markov process on $\mathcal{S}$. 

\begin{figure}
\centering
{\unitlength 0.1in%
\begin{picture}(32.0000,12.8500)(4.0000,-19.3500)%
%
\special{pn 13}%
\special{pa 400 1800}%
\special{pa 3600 1800}%
\special{fp}%
\special{sh 1}%
\special{pa 3600 1800}%
\special{pa 3533 1780}%
\special{pa 3547 1800}%
\special{pa 3533 1820}%
\special{pa 3600 1800}%
\special{fp}%
%
\special{pn 13}%
\special{ar 3000 1600 150 150 0.0000000 6.2831853}%
%
\special{pn 13}%
\special{ar 3000 1600 150 150 0.0000000 6.2831853}%
%
\special{pn 13}%
\special{ar 3000 1200 150 150 0.0000000 6.2831853}%
%
\special{pn 13}%
\special{ar 1800 1600 150 150 0.0000000 6.2831853}%
%
\special{pn 13}%
\special{ar 1000 1600 150 150 0.0000000 6.2831853}%
%
\special{pn 13}%
\special{ar 1000 1200 150 150 0.0000000 6.2831853}%
%
\special{pn 13}%
\special{ar 1000 800 150 150 0.0000000 6.2831853}%
\put(10.0000,-8.0000){\makebox(0,0){$2$}}%
\put(10.0000,-12.0000){\makebox(0,0){$2$}}%
\put(10.0000,-16.0000){\makebox(0,0){$3$}}%
\put(18.0000,-16.0000){\makebox(0,0){$4$}}%
\put(30.0000,-12.0000){\makebox(0,0){$1$}}%
\put(30.0000,-16.0000){\makebox(0,0){$2$}}%
\put(30.0000,-20.0000){\makebox(0,0){$2$}}%
\put(26.0000,-20.0000){\makebox(0,0){$1$}}%
\put(22.0000,-20.0000){\makebox(0,0){$0$}}%
\put(18.0000,-20.0000){\makebox(0,0){$-1$}}%
\put(14.0000,-20.0000){\makebox(0,0){$-2$}}%
\put(10.0000,-20.0000){\makebox(0,0){$-3$}}%
\end{picture}}%
\caption{}
\label{fig2}
\end{figure}

\subsection{Derivation of the transition rate matrix}

Hereafter we assume that 
\begin{align*}
0<q<1.  
\end{align*}
Let $U=\oplus_{\mu=1}^{N}\mathbb{C}u_{\mu}$ be the $N$-dimensional vector space 
with the basis $\{u_{\mu}\}_{\mu=1}^{N}$.  
Consider the $\mathbb{C}$-linear operator $R \in \mathrm{End}(U^{\otimes 2})$ defined by 
\begin{align*}
R(u_{\mu} \otimes u_{\mu'})=\left\{ 
\begin{array}{ll}
q^{1/2}u_{\mu'}\otimes u_{\mu} & (\mu>\mu'), \\
u_{\mu} \otimes u_{\mu} & (\mu=\mu'), \\ 
(1-q)u_{\mu}\otimes u_{\mu'}+q^{1/2}u_{\mu'}\otimes u_{\mu} & (\mu<\mu'). 
\end{array}
\right.
\end{align*}
Let $R_{i} \in \mathrm{End}(U^{\otimes k}) \, (1\le i<k) $ be the linear operator 
acting as $R$ on the tensor product of the $i$-th and $(i+1)$-th component of $U^{\otimes k}$.  

\begin{thm}\cite{J}
The assignment $T_{i} \mapsto R_{i} \, (1\le i<k)$ uniquely extends to a $\mathbb{C}$-algebra 
homomorphism $\mathcal{H}_{k} \to \mathrm{End}(U^{\otimes k})$.  
\end{thm}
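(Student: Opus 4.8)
The plan is to show that the operators $R_i$ satisfy the defining relations of $\mathcal{H}_k$, namely the quadratic relation $(R_i-1)(R_i+q)=0$, the far-commutativity $R_iR_j=R_jR_i$ for $|i-j|>1$, and the braid relation $R_iR_{i+1}R_i=R_{i+1}R_iR_{i+1}$. Since $\mathcal{H}_k$ is presented by the generators $T_i$ subject to exactly these relations, verifying them for the $R_i$ produces a well-defined algebra homomorphism, and this homomorphism is unique because the $T_i$ generate $\mathcal{H}_k$.

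The far-commutativity is immediate: for $|i-j|>1$ the operators $R_i$ and $R_j$ act nontrivially on disjoint pairs of tensor factors, hence commute. The quadratic relation is local to two factors, so it suffices to check $(R-1)(R+q)=0$ on $U^{\otimes 2}$. The operator $R$ fixes each vector $u_\mu\otimes u_\mu$ and preserves every two-dimensional subspace $\mathbb{C}(u_\mu\otimes u_{\mu'})\oplus\mathbb{C}(u_{\mu'}\otimes u_\mu)$ with $\mu<\mu'$. On the latter, in the ordered basis $(u_\mu\otimes u_{\mu'},\,u_{\mu'}\otimes u_\mu)$, the matrix of $R$ is
\begin{align*}
\begin{pmatrix} 1-q & q^{1/2} \\ q^{1/2} & 0 \end{pmatrix},
\end{align*}
whose characteristic polynomial equals $(\lambda-1)(\lambda+q)$; hence $(R-1)(R+q)$ annihilates this subspace as well.

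The braid relation is local to three consecutive factors, so it reduces to the identity $(R\otimes 1)(1\otimes R)(R\otimes 1)=(1\otimes R)(R\otimes 1)(1\otimes R)$ on $U^{\otimes 3}$. Because the action of $R$ depends only on the order type of the indices, that is, on whether two indices are $<$, $=$, or $>$, both sides can be evaluated on each basis vector $u_a\otimes u_b\otimes u_c$ according to the relative ordering of $a,b,c$. This splits into the case $a=b=c$, the cases in which exactly two of the three indices coincide, and the cases in which $a,b,c$ are pairwise distinct. I would verify that both composites produce the same vector in each case.

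The main obstacle is the braid relation. The genuine work lies in the pairwise-distinct and two-equal cases, where one must track the coefficients $1-q$ and $q^{1/2}$ through three successive applications of $R$ and confirm that the intermediate mixed terms recombine correctly on the two sides; the remaining cases are routine. This is precisely the content of Jimbo's construction \cite{J}, and I would either carry out the finite case check directly or invoke that reference.
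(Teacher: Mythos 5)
The paper offers no internal proof of this theorem: it is imported verbatim from Jimbo \cite{J}, so the only meaningful comparison is with the standard verification, which is exactly what you propose. Your argument is correct as far as it is carried out: presenting $\mathcal{H}_{k}$ by the quadratic, far-commutation and braid relations does reduce the theorem to three checks; far-commutation is indeed immediate from disjointness of tensor factors; and your quadratic check is right, since $(R-1)$ kills each $u_{\mu}\otimes u_{\mu}$, while on each invariant plane your $2\times 2$ matrix has characteristic polynomial $\lambda^{2}-(1-q)\lambda-q=(\lambda-1)(\lambda+q)$, so Cayley--Hamilton gives $(R-1)(R+q)=0$ there. Your reduction of the braid relation to the finitely many order types of $(a,b,c)$ is also sound, because $R$ sees indices only through $<$, $=$, $>$. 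The one place you stop short is the braid case check itself, which you only sketch; it does go through — for instance, for $a<b<c$ both sides of $R_{1}R_{2}R_{1}=R_{2}R_{1}R_{2}$ applied to $u_{a}\otimes u_{b}\otimes u_{c}$ produce coefficient $(1-q)^{3}+q(1-q)$ on $u_{a}\otimes u_{b}\otimes u_{c}$, coefficient $(1-q)^{2}q^{1/2}$ on each single-swap vector, $q(1-q)$ on each of the two remaining rearrangements, and $q^{3/2}$ on the full reversal $u_{c}\otimes u_{b}\otimes u_{a}$ — and deferring the remaining finite computation to \cite{J} is precisely what the paper itself does, so this is not a gap relative to the paper.
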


In the following we regard $U^{\otimes k}$ as a left $\mathcal{H}_{k}$-module 
with respect to the action defined above. 

For $\boldsymbol{\mu}=(\mu_{1}, \ldots , \mu_{k}) \in I_{N, k}$, we set 
\begin{align*}
\boldsymbol{u}_{\boldsymbol{\mu}}:=u_{\mu_{1}} \otimes \cdots \otimes u_{\mu_{k}} \in U^{\otimes k}, 
\end{align*}
and 
\begin{align*}
\ell(\boldsymbol{\mu}):=\#
\{ (i, j) \, | \, 1\le i<j \le k \,\, \hbox{and} \,\, \mu_{i}>\mu_{j} \}. 
\end{align*}
Then any function $f: L_{+} \to U^{\otimes k}$ is uniquely written in the form 
\begin{align}
f(x)=\sum_{\boldsymbol{\mu}\in I_{N, k}}q^{\ell(\boldsymbol{\mu})/2}
f_{\boldsymbol{\mu}}(x)\,\boldsymbol{u}_{\boldsymbol{\mu}} \quad (x \in L_{+}), 
\label{eq:symmetric-function-decompose} 
\end{align}
where $f_{\boldsymbol{\mu}}$ is a $\mathbb{C}$-valued function on $L_{+}$. 
Then the space $\mathcal{F}(L_{+}, U^{\otimes k})$ has the following description.

\begin{prop}\label{prop:symmetric-function-decompose}
In the above notation the following statements are equivalent. 
\begin{enumerate}
 \item $f \in \mathcal{F}(L_{+}, U^{\otimes k})$. 
 \item Suppose that $x \in L_{+}$ and $1\le i<k$. If $a_{i}(x)=0$ then 
$f_{\ldots , \mu_{i}, \mu_{i+1}, \ldots}(x)=f_{\ldots , \mu_{i+1}, \mu_{i}, \ldots}(x)$ 
for all $\boldsymbol{\mu}=(\mu_{1}, \ldots , \mu_{k}) \in I_{N, k}$. 
\end{enumerate} 
\end{prop}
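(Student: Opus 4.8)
The plan is to reduce the invariance condition defining $\mathcal{F}(L_{+}, U^{\otimes k})$ to a condition on the coefficient functions $f_{\boldsymbol{\mu}}$ by expanding $T_{i}.f(x)$ directly in the basis $\{\boldsymbol{u}_{\boldsymbol{\mu}}\}_{\boldsymbol{\mu} \in I_{N,k}}$. Since $a_{i}(x)=0$, Proposition \ref{prop:action-trivial} gives $(\rho(T_{i})f)(x)=T_{i}.f(x)=R_{i}.f(x)$, and $R_{i}$ affects only the $i$-th and $(i+1)$-th tensor factors. First I would fix $x$ with $a_{i}(x)=0$ and partition $I_{N,k}$ into families obtained by fixing all entries except those in positions $i$ and $i+1$. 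The operator $R_{i}$ preserves each family, and within a family it fixes the diagonal terms $\mu_{i}=\mu_{i+1}$ and acts on each off-diagonal pair $\{\boldsymbol{u}_{\boldsymbol{\mu}}, \boldsymbol{u}_{\boldsymbol{\mu}'}\}$ through a two-dimensional block, where $\boldsymbol{\mu}'$ denotes $\boldsymbol{\mu}$ with $\mu_{i}$ and $\mu_{i+1}$ interchanged.

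The crux is to track the inversion number under this transposition. Assuming $\mu_{i}<\mu_{i+1}$, I would check that every inversion involving a position $p\notin\{i,i+1\}$ is preserved: for $p<i$ the combined indicator $[\mu_{p}>\mu_{i}]+[\mu_{p}>\mu_{i+1}]$ is symmetric in $\mu_{i},\mu_{i+1}$, and likewise for $p>i+1$, so only the inversion between positions $i$ and $i+1$ changes. Hence $\ell(\boldsymbol{\mu}')=\ell(\boldsymbol{\mu})+1$. This is the decisive step, and it is precisely the reason the normalization $q^{\ell(\boldsymbol{\mu})/2}$ in \eqref{eq:symmetric-function-decompose} is chosen as it is: it exactly absorbs the $q^{1/2}$ factors occurring in the definition of $R$.

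Next I would carry out the block computation. For $\mu_{i}=\mu_{i+1}$ the relation $R_{i}\boldsymbol{u}_{\boldsymbol{\mu}}=\boldsymbol{u}_{\boldsymbol{\mu}}$ imposes no constraint, consistently with (ii), which is then vacuous. For $\mu_{i}<\mu_{i+1}$ I would substitute $R_{i}\boldsymbol{u}_{\boldsymbol{\mu}}=(1-q)\boldsymbol{u}_{\boldsymbol{\mu}}+q^{1/2}\boldsymbol{u}_{\boldsymbol{\mu}'}$ and $R_{i}\boldsymbol{u}_{\boldsymbol{\mu}'}=q^{1/2}\boldsymbol{u}_{\boldsymbol{\mu}}$ into the two relevant terms $q^{\ell(\boldsymbol{\mu})/2}f_{\boldsymbol{\mu}}(x)\boldsymbol{u}_{\boldsymbol{\mu}}+q^{\ell(\boldsymbol{\mu}')/2}f_{\boldsymbol{\mu}'}(x)\boldsymbol{u}_{\boldsymbol{\mu}'}$, apply $\ell(\boldsymbol{\mu}')=\ell(\boldsymbol{\mu})+1$, and collect coefficients. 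The coefficient of $\boldsymbol{u}_{\boldsymbol{\mu}'}$ becomes $q^{\ell(\boldsymbol{\mu}')/2}f_{\boldsymbol{\mu}}(x)$, while the coefficient of $\boldsymbol{u}_{\boldsymbol{\mu}}$ becomes $q^{\ell(\boldsymbol{\mu})/2}\big((1-q)f_{\boldsymbol{\mu}}(x)+qf_{\boldsymbol{\mu}'}(x)\big)$.

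Finally I would equate these with the coefficients of $f(x)$ itself. Comparing the $\boldsymbol{u}_{\boldsymbol{\mu}'}$-coefficients yields $f_{\boldsymbol{\mu}}(x)=f_{\boldsymbol{\mu}'}(x)$, while the $\boldsymbol{u}_{\boldsymbol{\mu}}$-coefficients give $(1-q)f_{\boldsymbol{\mu}}(x)+qf_{\boldsymbol{\mu}'}(x)=f_{\boldsymbol{\mu}}(x)$, that is, $q\big(f_{\boldsymbol{\mu}'}(x)-f_{\boldsymbol{\mu}}(x)\big)=0$; since $0<q<1$, both collapse to the single equality $f_{\boldsymbol{\mu}}(x)=f_{\boldsymbol{\mu}'}(x)$. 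As the families exhaust $I_{N,k}$, this shows that $T_{i}.f(x)=f(x)$ holds exactly when $f_{\ldots,\mu_{i},\mu_{i+1},\ldots}(x)=f_{\ldots,\mu_{i+1},\mu_{i},\ldots}(x)$ for all $\boldsymbol{\mu}$, giving the equivalence of (i) and (ii). The only real difficulty is the inversion bookkeeping; once $\ell(\boldsymbol{\mu}')=\ell(\boldsymbol{\mu})+1$ is in hand, the remainder is a short coefficient comparison.
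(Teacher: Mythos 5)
Your proposal is correct and takes essentially the same approach as the paper, whose proof is precisely the ``direct computation'' from the definition of $R$ and \eqref{eq:symmetric-function-decompose} that you spell out: decomposing into the two-dimensional blocks of $R_{i}$, using $\ell(\boldsymbol{\mu}')=\ell(\boldsymbol{\mu})+1$ so the normalization $q^{\ell(\boldsymbol{\mu})/2}$ absorbs the $q^{1/2}$ factors, and comparing coefficients (with $0<q<1$ making both resulting equations collapse to $f_{\boldsymbol{\mu}}(x)=f_{\boldsymbol{\mu}'}(x)$). Your invocation of Proposition \ref{prop:action-trivial} is harmless but unnecessary, since the defining condition of $\mathcal{F}(L_{+}, U^{\otimes k})$ is already stated directly as $T_{i}.f(x)=f(x)$ when $a_{i}(x)=0$.
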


\begin{proof}
It follows from the definition of the operator $R$ and 
\eqref{eq:symmetric-function-decompose} by a direct computation. 
\end{proof}

We define the $\mathbb{C}$-linear map $\varphi: F(\mathcal{S}) \to \mathcal{F}(L_{+}, U^{\otimes k})$ by 
\begin{align*}
(\varphi h)(x):=\sum_{\boldsymbol{\mu}\in I_{N, k}} q^{\ell(\boldsymbol{\mu})/2}h(x, \boldsymbol{\mu}[x])\,
\boldsymbol{u}_{\boldsymbol{\mu}} \quad 
(h \in F(\mathcal{S}), x \in L_{+}). 
\end{align*}
Proposition \ref{prop:symmetric-function-decompose} implies that  
the function defined by the right hand side above belongs to $\mathcal{F}(L_{+}, U^{\otimes k})$. 
The map $\varphi$ is an isomorphism with the inverse 
\begin{align*}
(\varphi^{-1}f)(x, \boldsymbol{\nu})=f_{\boldsymbol{\nu}}(x) \quad 
(f \in \mathcal{F}(L_{+}, U^{\otimes k}), \, (x, \boldsymbol{\nu}) \in \mathcal{S}), 
\end{align*}
where $f_{\boldsymbol{\nu}}$ is defined by \eqref{eq:symmetric-function-decompose}.

Now let us write down the operator $\varphi^{-1}H^{+}\varphi \, : \, F(\mathcal{S}) \to F(\mathcal{S})$. 
For $1\le c \le k$, we set 
\begin{align*}
A^{(c)}:=\sum_{l=1}^{c}q^{c-l}T_{l}^{-1}T_{l+1}^{-1}\cdots T_{c-1}^{-1} \in \mathcal{H}_{k}.  
\end{align*}
It acts on $U^{\otimes c}$. 
We set the matrix element 
$A_{\boldsymbol{\mu},\boldsymbol{\mu}'}^{(c)} \in \mathbb{R}$ by 
\begin{align*}
A^{(c)}\boldsymbol{u}_{\boldsymbol{\mu}}=
\sum_{\boldsymbol{\mu}' \in I_{N, c}}A_{\boldsymbol{\mu},\boldsymbol{\mu}'}^{(c)}
\boldsymbol{u}_{\boldsymbol{\mu}'}. 
\end{align*}

Suppose that $h \in F(\mathcal{S})$ and $(x, \boldsymbol{\nu}) \in \mathcal{S}$.   
Denote by $(c_{1}, \ldots , c_{r})$ the cluster coordinate of $x$, and 
decompose $\boldsymbol{\nu}=(\boldsymbol{\nu}_{1}, \ldots , \boldsymbol{\nu}_{r})$, 
where $\boldsymbol{\nu}_{a} \in I_{N, c_{a}}^{+}$ for $1\le a\le r$. 
{}From \eqref{eq:Hamiltonian-restricted} we see that 
\begin{align}
(\varphi^{-1}H^{+}\varphi h)(x, \boldsymbol{\nu})&=
\sum_{a=1}^{r}\sum_{\boldsymbol{\eta} \in I_{N, c_{a}}}
q^{\ell(\boldsymbol{\eta})/2}
A_{\boldsymbol{\eta}, \boldsymbol{\nu}_{a}}^{(c_{a})}\,
(\varphi h)_{(\boldsymbol{\nu}_{1}, \ldots , \boldsymbol{\nu}_{a-1}, 
\boldsymbol{\eta}, \boldsymbol{\nu}_{a+1}, \ldots , \boldsymbol{\nu}_{r})}
(x-v_{c_{1}+\cdots +c_{a}})
\label{eq:H-conjugate} \\ 
&-\frac{\alpha}{1-q}\sum_{a=1}^{c_{a}}(c_{a}-[c_{a}]_{q})
h(x, \boldsymbol{\nu}). 
\nonumber 
\end{align}

\begin{prop}\label{prop:matrix-element-A}
Suppose that $1 \le c\le k$ and 
\begin{align}
\boldsymbol{\nu}=(\underbrace{1, \ldots , 1}_{m_{1}}, \ldots , \underbrace{N, \ldots , N}_{m_{N}}) 
\in I_{N, c}^{+},   
\label{eq:matrix-element-A-0}
\end{align}
where $m_{1}, \ldots , m_{N}$ are non-negative integers satisfying $\sum_{i=1}^{N}m_{i}=c$. 
Then the matrix element $A_{\boldsymbol{\eta}, \boldsymbol{\nu}}^{(c)}$ is zero unless 
$\boldsymbol{\eta}$ is of the form 
\begin{align}
(\underbrace{1, \ldots , 1}_{m_{1}}, \ldots , 
\underbrace{b, \ldots , b}_{m_{b}-1}, \ldots 
\underbrace{N, \ldots , N}_{m_{N}}, b)  
\label{eq:matrix-element-A}
\end{align} 
for some $1\le b \le N$. 
If $\boldsymbol{\eta}$ is equal to \eqref{eq:matrix-element-A}, then 
\begin{align*}
A_{\boldsymbol{\eta}, \boldsymbol{\nu}}^{(c)}=
\frac{1-q^{m_{b}}}{1-q}q^{\sum_{i=b+1}^{N}m_{i}/2}=
\frac{1-q^{m_{b}}}{1-q}q^{\ell(\boldsymbol{\eta})/2}.  
\end{align*}
\end{prop}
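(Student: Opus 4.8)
The plan is to prove a single, stronger statement by downward induction on the last tensor slot: I will compute the \emph{entire} vector $(A^{(c)})^{\mathsf{T}}\boldsymbol{u}_{\boldsymbol{\nu}}$ for sorted $\boldsymbol{\nu}$, where $\mathsf{T}$ is the transpose for the inner product on $U^{\otimes c}$ making $\{\boldsymbol{u}_{\boldsymbol{\mu}}\}$ orthonormal. The reason to pass to the transpose is that it keeps the clean object (the sorted target $\boldsymbol{\nu}$) as the induction variable: deleting the last entry of a sorted tuple again gives a sorted tuple. One checks directly that the matrix of $R$ in the standard basis is symmetric (the only off-diagonal block, between $u_\mu\otimes u_{\mu'}$ and $u_{\mu'}\otimes u_\mu$, has both entries equal to $q^{1/2}$), so each $R_i$, hence each $\rho(T_i)^{-1}$, is symmetric, and $A^{(c)}_{\boldsymbol{\eta},\boldsymbol{\nu}}$ equals the coefficient of $\boldsymbol{u}_{\boldsymbol{\eta}}$ in $(A^{(c)})^{\mathsf{T}}\boldsymbol{u}_{\boldsymbol{\nu}}=\sum_{l=1}^{c}q^{c-l}T_{c-1}^{-1}T_{c-2}^{-1}\cdots T_l^{-1}\,\boldsymbol{u}_{\boldsymbol{\nu}}$.

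First I would record the explicit inverse $R$-matrix: $T_i^{-1}$ sends $u_\mu\otimes u_{\mu'}$ to $q^{-1/2}u_{\mu'}\otimes u_\mu$ if $\mu<\mu'$, to itself if $\mu=\mu'$, and to $q^{-1/2}u_{\mu'}\otimes u_\mu-\tfrac{1-q}{q}u_\mu\otimes u_{\mu'}$ if $\mu>\mu'$. Then I would peel off the $l=c$ summand (equal to $1$) and factor $qT_{c-1}^{-1}$ out of the rest to obtain the key recursion
\[
(A^{(c)})^{\mathsf{T}}=1+qT_{c-1}^{-1}(A^{(c-1)})^{\mathsf{T}},
\]
where $(A^{(c-1)})^{\mathsf{T}}$ acts on the first $c-1$ tensor factors and as the identity on the last one.

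For the induction write $\boldsymbol{\nu}=(1^{m_1},\dots,N^{m_N})$, let $d$ be the largest color with $m_d>0$, and split $\boldsymbol{\nu}=(\boldsymbol{\nu}',d)$ with $\boldsymbol{\nu}'\in I_{N,c-1}^{+}$ obtained by deleting one $d$. Since $(A^{(c-1)})^{\mathsf{T}}$ fixes the last slot, $(A^{(c-1)})^{\mathsf{T}}(\boldsymbol{u}_{\boldsymbol{\nu}'}\otimes u_d)=\big((A^{(c-1)})^{\mathsf{T}}\boldsymbol{u}_{\boldsymbol{\nu}'}\big)\otimes u_d$, and the induction hypothesis expands $(A^{(c-1)})^{\mathsf{T}}\boldsymbol{u}_{\boldsymbol{\nu}'}$ as a sum over the present colors $b'$ of the special tuples $\boldsymbol{\eta}^{(b')}$, each ending in $b'\le d$. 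Applying $qT_{c-1}^{-1}$ to the pair $(b',d)$ in the last two slots then uses only the two simplest cases of the inverse $R$-matrix: for $b'<d$ it is a pure swap with factor $q^{-1/2}$, producing exactly the claimed special tuple with moved color $b'$; for $b'=d$ it is the identity, producing the sorted tuple $\boldsymbol{\nu}$ again. Adding the leading $1\cdot\boldsymbol{u}_{\boldsymbol{\nu}}$ term and collecting by color gives the complete expansion, which simultaneously yields the vanishing assertion (nothing outside the special tuples occurs) and the values.

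The bookkeeping to finish is twofold. For $b'<d$ the coefficient is $\tfrac{1-q^{m_{b'}}}{1-q}q^{\ell(\boldsymbol{\eta}^{(b')})/2}\cdot q\cdot q^{-1/2}$, and since $\boldsymbol{\nu}'$ has one fewer copy of the large color $d>b'$ one computes $\ell(\boldsymbol{\eta}^{(b')})=\ell(\boldsymbol{\eta})-1$, so that $q\cdot q^{-1/2}$ exactly restores $q^{\ell(\boldsymbol{\eta})/2}$; as $m_{b'}$ is unchanged this is the claimed value. For the color $d$ the identity contribution $q\cdot\tfrac{1-q^{m_d-1}}{1-q}$ combines with the leading $1$ via $1+q\tfrac{1-q^{m_d-1}}{1-q}=\tfrac{1-q^{m_d}}{1-q}$, matching the claim since $\ell=0$ there. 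I expect the main obstacle to be the careful treatment of the edge cases, in particular $m_d=1$, where $d$ disappears from $\boldsymbol{\nu}'$ so the color-$d$ term comes solely from the leading $1$, and verifying in each case that the output tuples are precisely the special forms and that the inversion counts match the powers of $q^{1/2}$. The base case $c=1$ is immediate since $A^{(1)}=1$.
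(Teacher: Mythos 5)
Your proof is correct, and it rests on the same central device as the paper's: pairing against the bilinear form that makes $\{\boldsymbol{u}_{\boldsymbol{\mu}}\}$ orthonormal and transposing, so that the sorted vector $\boldsymbol{u}_{\boldsymbol{\nu}}$ becomes the input and only the two easy cases (pure swap with factor $q^{-1/2}$, or identity) of the inverse $R$-matrix ever occur. Indeed, the paper's operator $S$, defined by $(R^{-1}\boldsymbol{u}, \boldsymbol{u}')=(\boldsymbol{u}, S\boldsymbol{u}')$, coincides with the explicit $R^{-1}$ you record, which is precisely your observation that $R$ is symmetric. Where you genuinely diverge is in the evaluation. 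The paper computes each of the $c$ summands in one stroke: on sorted input the whole product collapses to a single basis vector, $S_{c-1}\cdots S_{l}\,\boldsymbol{u}_{\boldsymbol{\nu}}=q^{-\sum_{i=b+1}^{N}m_{i}/2}\,\boldsymbol{u}_{\boldsymbol{\nu}^{(b)}}$ with $b$ the color occupying slot $l$, so the vanishing claim is immediate and the value follows by summing the geometric series $\sum_{l}q^{c-l}$ over the color-$b$ block to get $\frac{1-q^{m_b}}{1-q}$. You instead prove the equivalent full-vector expansion by induction on $c$ via the recursion $(A^{(c)})^{\mathsf{T}}=1+qT_{c-1}^{-1}(A^{(c-1)})^{\mathsf{T}}$ (which is verified correctly: the $l=c$ term is $1$ and the factor $q\,T_{c-1}^{-1}$ pulls out of the rest), with the identity $1+q\,\frac{1-q^{m_d-1}}{1-q}=\frac{1-q^{m_d}}{1-q}$ playing the role of the paper's geometric series, and the inversion count $\ell(\boldsymbol{\eta}'^{(b')})=\ell(\boldsymbol{\eta}^{(b')})-1$ absorbing the swap factor $q\cdot q^{-1/2}$. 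Your bookkeeping checks out, including the $m_d=1$ edge case where the color-$d$ term comes only from the leading $1$, matching $\frac{1-q}{1-q}=1$. What each approach buys: the paper's direct evaluation is shorter and handles each matrix element in a single pass; your recursion confines the $R$-matrix computation to one application of $T_{c-1}^{-1}$ per step (never touching the complicated $\mu>\mu'$ case at all, since the appended color $d$ is maximal) and yields the entire column of the matrix at once, at the cost of somewhat heavier induction bookkeeping.
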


\begin{proof}
For $n \ge 1$, 
let $( \, , \, )$ be the non-degenerate bilinear form on $U^{\otimes n}$ defined by 
\begin{align*}
(u_{\mu_{1}}\otimes \cdots \otimes u_{\mu_{n}}, u_{\nu_{1}}\otimes \cdots \otimes u_{\nu_{n}})=
\delta_{\mu_{1}\nu_{1}} \cdots \delta_{\mu_{n}\nu_{n}}.  
\end{align*}
Consider the linear operator $S \in \mathrm{End}(U^{\otimes 2})$ defined by 
\begin{align*}
S(u_{\mu} \otimes u_{\mu'})=\left\{ 
\begin{array}{ll}
(1-q^{-1})u_{\mu}\otimes u_{\mu'}+q^{-1/2}u_{\mu'}\otimes u_{\mu} & (\mu>\mu'), \\
u_{\mu} \otimes u_{\mu} & (\mu=\mu'), \\ 
q^{-1/2}u_{\mu'}\otimes u_{\mu} & (\mu<\mu'). 
\end{array}
\right.
\end{align*} 
Then it holds that $(R^{-1}\boldsymbol{u}, \boldsymbol{u'})=(\boldsymbol{u}, S\boldsymbol{u'})$ 
for $\boldsymbol{u}, \boldsymbol{u'} \in U^{\otimes 2}$. 
Let $S_{i} \in \mathrm{End}(U^{\otimes k}) \, (1 \le i<k)$ be the linear operator acting as $S$ 
on the tensor product of the $i$-th and $(i+1)$-th component of $U^{\otimes k}$. 
Then we have 
\begin{align*}
A_{\boldsymbol{\eta}, \boldsymbol{\nu}}^{(c)}=
(A^{(c)}\boldsymbol{u}_{\boldsymbol{\eta}}, \boldsymbol{u}_{\boldsymbol{\nu}})=
\sum_{l=1}^{c}q^{c-l}(\boldsymbol{u}_{\boldsymbol{\eta}}, S_{c-1}\cdots S_{l}\boldsymbol{u}_{\boldsymbol{\nu}}).  
\end{align*} 
{}From the definition of $S$, we see that 
\begin{align*}
S_{c-1}\cdots S_{l}\boldsymbol{u}_{\boldsymbol{\nu}}=
q^{-\sum_{i=b+1}^{N}m_{i}/2}\boldsymbol{u}_{\boldsymbol{\nu^{(b)}}},  
\end{align*}
where $b$ is determined by the condition $m_{1}+\cdots +m_{b-1}<l\le m_{1}+\cdots +m_{b}$ 
and $\boldsymbol{\nu^{(b)}}$ is the tuple \eqref{eq:matrix-element-A}. 
Therefore $A_{\boldsymbol{\eta}, \boldsymbol{\nu}}^{(c)}=0$ unless 
$\boldsymbol{\eta}$ is of the form \eqref{eq:matrix-element-A}. 
If $\boldsymbol{\eta}$ is equal to \eqref{eq:matrix-element-A}, it holds that 
\begin{align*}
A_{\boldsymbol{\eta}, \boldsymbol{\nu}}^{(c)}&=
\sum_{l=m_{1}+\cdots +m_{b-1}+1}^{m_{1}+\cdots +m_{b}}q^{c-l-\sum_{i=b+1}^{N}m_{i}/2} \\ 
&=q^{\sum_{i=b+1}^{N}m_{i}/2}\sum_{l=m_{1}+\cdots +m_{b-1}+1}^{m_{1}+\cdots +m_{b}}q^{\sum_{i=1}^{b}m_{i}-l}=
q^{\sum_{i=b+1}^{N}m_{i}/2}\frac{1-q^{m_{b}}}{1-q}.  
\end{align*}
Here we used $c=\sum_{i=1}^{N}m_{i}$. 
This completes the proof. 
\end{proof}

For $\boldsymbol{\nu} \in I_{N, c}^{+}$ of the form \eqref{eq:matrix-element-A-0} 
and $1\le b \le N$, 
we set 
\begin{align*}
\boldsymbol{\nu}^{b, \pm}:=
(\underbrace{1, \ldots , 1}_{m_{1}}, \ldots , 
\underbrace{b, \ldots , b}_{m_{b}\pm 1}, \ldots 
\underbrace{N, \ldots , N}_{m_{N}}) \in I_{N, c\pm 1}^{+}.   
\end{align*}
Let $(x, \boldsymbol{\nu}) \in \mathcal{S}$.  
Denote by $(c_{1}, \ldots , c_{r})$ the cluster coordinate of $x$  
and write $\boldsymbol{\nu}=(\boldsymbol{\nu}_{1}, \ldots , \boldsymbol{\nu}_{r})$  
where $\boldsymbol{\nu}_{a} \in I_{N, c_{a}}^{+} \, (1\le a \le r)$. 
For $(y, \boldsymbol{\eta}) \in \mathcal{S}$, 
we write $(x, \boldsymbol{\nu}) \rightsquigarrow (y, \boldsymbol{\eta})$ if the following conditions hold: 
\begin{enumerate}
 \item 
 $y=x-v_{c_{1}+\cdots +c_{a}}$ for some $1\le a \le r$. 
 \item 
If $\epsilon_{c_{1}+\cdots +c_{a}}(x)-1>\epsilon_{c_{1}+\cdots +c_{a}+1}(x)$, then 
\begin{align*}
\boldsymbol{\eta}=(\boldsymbol{\nu}_{1}, \ldots , \boldsymbol{\nu}_{a-1}, 
\boldsymbol{\nu}_{a}^{b, -}, b, \boldsymbol{\nu}_{a+1}, \ldots , \boldsymbol{\nu}_{r}) 
\end{align*}
for some $1\le b \le N$. 
If $\epsilon_{c_{1}+\cdots +c_{a}}(x)-1=\epsilon_{c_{1}+\cdots +c_{a}+1}(x)$, then 
\begin{align*}
\boldsymbol{\eta}=(\boldsymbol{\nu}_{1}, \ldots , \boldsymbol{\nu}_{a-1}, 
\boldsymbol{\nu}_{a}^{b, -}, \boldsymbol{\nu}_{a+1}^{b, +}, \boldsymbol{\nu}_{a+2}, \ldots , \boldsymbol{\nu}_{r}) 
\end{align*}
for some $1\le b \le N$. 
\end{enumerate}
Moreover, when the above conditions are satisfied, we set 
\begin{align}
c(x, \boldsymbol{\nu} | y, \boldsymbol{\eta}):=\frac{1-q^{m_{b}}}{1-q} q^{\sum_{i=b+1}^{N}m_{i}},  
\label{eq:transition-rate}
\end{align}
where $m_{i} \, (1\le i \le N)$ is the number of $i$ in $\boldsymbol{\nu}_{a}$. 
We set $c(x, \boldsymbol{\nu} | y, \boldsymbol{\eta})=0$ unless 
$(x, \boldsymbol{\nu})\rightsquigarrow (y, \boldsymbol{\eta})$. 

Under the identification of $\mathcal{S}$ with the set of configurations of
$k$ bosonic particles of $N$ species given in the previous subsection,  
the condition $(x, \boldsymbol{\nu}) \rightsquigarrow (y, \boldsymbol{\eta})$ means that 
the configuration corresponding to $(y, \boldsymbol{\eta})$ is obtained from 
that corresponding to $(x, \boldsymbol{\nu})$ by moving one particle to the left. 

{}From \eqref{eq:H-conjugate} and the equality 
\begin{align*}
\sum_{b=1}^{N}\frac{1-q^{m_{b}}}{1-q} q^{\sum_{i=b+1}^{N}m_{i}}=\frac{1-q^{\sum_{i=1}^{N}m_{i}}}{1-q}, 
\end{align*}
we see that 
\begin{align*}
(\varphi^{-1}H^{+}\varphi h)(x, \boldsymbol{\nu})&=
\sum_{(y, \boldsymbol{\eta}) \in \mathcal{S}}c(x, \boldsymbol{\nu} | y, \boldsymbol{\eta})
\left\{h(y, \boldsymbol{\eta})-h(x, \boldsymbol{\nu})\right\} \\ 
& {}+\left(\frac{1-q+\alpha}{1-q}\sum_{a=1}^{r}[c_{a}]_{q}-\frac{\alpha}{1-q}k\right)h(x, \boldsymbol{\nu}) 
\quad (h \in F(\mathcal{S}), (x, \boldsymbol{\nu}) \in \mathcal{S}), 
\end{align*}
where $(c_{1}, \ldots , c_{r})$ is the cluster coordinate of $x$. 
{}From the above formula we obtain the following result. 

\begin{thm}
Set $\alpha=-(1-q)$ and suppose that $0<q<1$. 
Then the operator 
\begin{align*}
Q:=\varphi^{-1}H^{+}\varphi-k 
\end{align*} 
gives the transition rate matrix of a continuous-time Markov process on $\mathcal{S}$. 
\end{thm}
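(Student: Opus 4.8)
The plan is to reduce the statement to the master-equation formula for $\varphi^{-1}H^{+}\varphi$ displayed immediately before the theorem, and then to verify the two defining properties of a transition rate matrix: non-negativity of the off-diagonal entries and vanishing of each row sum.

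First I would substitute $\alpha=-(1-q)$ into the diagonal coefficient of that formula. Since $1-q+\alpha=0$ and $-\alpha/(1-q)=1$, the term $\tfrac{1-q+\alpha}{1-q}\sum_{a=1}^{r}[c_a]_q$ vanishes identically while $-\tfrac{\alpha}{1-q}k$ collapses to $k$. Hence
\begin{align*}
(\varphi^{-1}H^{+}\varphi h)(x,\boldsymbol{\nu})=\sum_{(y,\boldsymbol{\eta})\in\mathcal{S}}c(x,\boldsymbol{\nu}\,|\,y,\boldsymbol{\eta})\bigl\{h(y,\boldsymbol{\eta})-h(x,\boldsymbol{\nu})\bigr\}+k\,h(x,\boldsymbol{\nu}),
\end{align*}
and subtracting $k$ cancels the last summand, giving
\begin{align*}
(Qh)(x,\boldsymbol{\nu})=\sum_{(y,\boldsymbol{\eta})\in\mathcal{S}}c(x,\boldsymbol{\nu}\,|\,y,\boldsymbol{\eta})\bigl\{h(y,\boldsymbol{\eta})-h(x,\boldsymbol{\nu})\bigr\}.
\end{align*}
This is exactly the generator in master-equation form, and I regard it as the main computational content of the argument.

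Next I would read off the matrix entries of $Q$. For $(y,\boldsymbol{\eta})\neq(x,\boldsymbol{\nu})$ the entry is $c(x,\boldsymbol{\nu}\,|\,y,\boldsymbol{\eta})$, while the diagonal entry is $-\sum_{(y,\boldsymbol{\eta})\neq(x,\boldsymbol{\nu})}c(x,\boldsymbol{\nu}\,|\,y,\boldsymbol{\eta})$; in particular every row sums to zero by construction, so conservativity is automatic. For non-negativity I would invoke \eqref{eq:transition-rate}: whenever $(x,\boldsymbol{\nu})\rightsquigarrow(y,\boldsymbol{\eta})$ the rate equals $\tfrac{1-q^{m_b}}{1-q}q^{\sum_{i=b+1}^{N}m_i}$, which is strictly positive under the standing hypothesis $0<q<1$, since $(1-q^{m_b})/(1-q)$ is a positive $q$-integer and the power of $q$ is nonnegative; for all other pairs $c$ vanishes.

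Finally I would note that for each fixed $(x,\boldsymbol{\nu})$ only finitely many transitions carry nonzero rate---at most one per cluster for each of the $N$ colors, hence at most $rN\le kN$---so the total exit rate is finite. Together with non-negativity and the zero row sums this shows that $Q$ is a stable, conservative $Q$-matrix and therefore defines a continuous-time Markov process on $\mathcal{S}$. Because the algebra was already carried out in the derivation preceding the theorem, no serious obstacle remains; the only points needing care are checking that the diagonal correction genuinely cancels once $\alpha=-(1-q)$ and confirming that the rates listed in \eqref{eq:transition-rate} are precisely the off-diagonal entries of $Q$.
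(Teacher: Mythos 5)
Your proposal is correct and takes essentially the same route as the paper: the theorem there is stated as an immediate consequence of the displayed master-equation formula, with $\alpha=-(1-q)$ killing the $\frac{1-q+\alpha}{1-q}\sum_{a}[c_{a}]_{q}$ term and reducing the diagonal correction to $k\,h(x,\boldsymbol{\nu})$, exactly as you compute. Your explicit verifications of the $Q$-matrix properties (non-negative off-diagonal rates for $0<q<1$, zero row sums, finitely many transitions hence finite exit rates) are precisely the routine checks the paper leaves implicit.
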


The resulting process is described as follows. 
In continuous time one particle may move from site $i$ to $i-1$ 
independently for each $i \in \mathbb{Z}$. 
The transition rate at which a particle with color $b$ moves 
is given by the right hand side of \eqref{eq:transition-rate}, 
where $m_{i} \, (1\le i\le N)$ is the number of particles 
with color $i$ in the cluster from which the moving particle leaves.


\section*{Acknowledgments}

The research of the author is supported by 
JSPS KAKENHI Grant Number 26400106.


\begin{thebibliography}{0000}

\bibitem{BC}
Barraquand, G. and Corwin, I., 
The $q$-Hahn asymmetric exclusion process, 
preprint, arXiv:1501.03445. 

\bibitem{BCPS1}
Borodin, A., Corwin, I., Petrov, L. and Sasamoto, T., 
Spectral theory for the $q$-Boson particle system, 
{\em Compos. Math.} {\bf 151} (2015), no. 1, 1--67. 

\bibitem{BCPS2}
Borodin, A., Corwin, I., Petrov, L. and Sasamoto, T., 
Spectral theory for interacting particle systems solvable by coordinate Bethe ansatz, 
preprint, arXiv:1407.8534. 

\bibitem{DE}
van Diejen, J. F. and Emsiz, E., 
Unitary representations of affine Hecke algebras related to Macdonald spherical functions, 
{\em J. Algebra} {\bf 354} (2012), 180--210.

\bibitem{EOS}
Emsiz, E., Opdam, E. M., and Stokman, J. V., 
Trigonometric Cherednik algebra at critical level and quantum many-body problems, 
{\em Selecta Math.} {\bf 14} (2009), no. 3-4, 571--605. 

\bibitem{G}
Gutkin, E., 
Integrable systems with delta-potential,  
{\em Duke Math. J.} {\bf 49} (1982), no. 1, 1--21. 

\bibitem{J}
Jimbo, M., 
A $q$-analogue of $U(gl(N+1))$, Hecke algebra, and the Yang-Baxter equation, 
{\em Lett. Math. Phys.} {\bf 11} (1986), no. 3, 247--252.

\bibitem{P}
Povolotsky, A. M., 
On the integrability of zero-range chipping models with factorized steady states, 
{\em J. Phys. A} {\bf 46} (2013), no. 46. 

\bibitem{SW}
Sasamoto, T. and  Wadati, M., 
Exact results for one-dimensional totally asymmetric diffusion models, 
{\em J. Phys. A} {\bf 31} (1998), no. 28, 6057--6071. 

\bibitem{T}
Takeyama, Y., 
A deformation of affine Hecke algebra and integrable stochastic particle system, 
{\em J. Phys. A}, {\bf 47} (2014), no. 46, 465203.  


\end{thebibliography}
\end{document}